\newtheorem{example}{Example}
\newtheorem{theorem}{Theorem}
\newtheorem{lemma}{Lemma}
\newtheorem{definition}{Definition}
\newcommand{\ours}{BSRBK}
\newcommand{\myparagraph}[1]{\vspace{1mm} \noindent \textbf{#1}.}
\title{Efficient Top-k Vulnerable Nodes Detection in Uncertain Graphs}
\author{
Dawei Cheng \\
   \And
Chen Chen \\
   \And
Xiaoyang Wang \\
   \And
Sheng Xiang
}
\begin{document}
\maketitle

\begin{abstract}
Uncertain graphs have been widely used to model complex linked data in many real-world applications, such as guaranteed-loan networks and power grids, where a node or edge may be associated with a probability. In these networks, a node usually has a certain chance of default or breakdown due to self-factors or the influence from upstream nodes. For regulatory authorities and companies, it is critical to efficiently identify the vulnerable nodes, i.e., nodes with high default risks, such that they could pay more attention to these nodes for the purpose of risk management. In this paper, we propose and investigate the problem of top-$k$ vulnerable nodes detection in uncertain graphs. We formally define the problem and prove its hardness. To identify the $k$ most vulnerable nodes, a sampling-based approach is proposed. Rigorous theoretical analysis is conducted to bound the quality of returned results. Novel optimization techniques and a bottom-$k$ sketch based approach are further developed in order to scale for large networks. In the experiments, we demonstrate the performance of proposed techniques on 3 real financial networks and 5 benchmark networks. The evaluation results show that the proposed methods can achieve up to 2 orders of magnitudes speedup compared with the baseline approach. Moreover, to further verify the advantages of our model in real-life scenarios, we integrate the proposed techniques with our current loan risk control system, which is deployed in the collaborated bank, for more evaluation. Particularly, we show that our proposed new model has superior performance on real-life guaranteed-loan network data, which can better predict the default risks of enterprises compared to the state-of-the-art techniques.

\end{abstract}

\keywords{Uncertain graph \and guaranteed-loan network \and top-$k$ \and vulnerable node detection}

\section{Introduction}
\label{sec:introduction}

Uncertainty is inherent in real-world data because of various reasons, such as the accuracy issue of devices and models~\cite{DBLP:conf/icde/LiDWMQY19,DBLP:journals/tkde/ZhangLZZZ16}. Sometimes, people may inject uncertainty to the data on purpose to protect user privacy~\cite{DBLP:journals/pvldb/BoldiBGT12}.
As a common data structure, graphs are widely used to model the complex relationships between different entities.
Due to the ubiquitous uncertainty, uncertain graph analysis has attracted significant attentions in the community of database management. A large number of graph problems have been studied in the context of uncertain graphs, e.g., nearest neighbor search~\cite{potamias2010k}, reliability query~\cite{ke2019depth}, cohesive subgraph mining~\cite{DBLP:conf/icde/LiDWMQY19}, etc.

In some real-life graphs, such as power grids and guaranteed-loan networks, nodes (e.g., facilities and enterprises) may breakdown or default due to self-factors or the issues from upstream nodes. To identify these high-risky (i.e., vulnerable) nodes, in this paper, we investigate a novel problem, named top-$k$ vulnerable nodes detection. Given a directed uncertain graph $\mathcal{G}$, each node $A$ (resp. edge $(A,B)$) is associated with a probability $p_s(A)$ (resp. $p(B|A)$). $p_s(A)$ denotes the probability that $A$ defaults due to itself factors, and $p(B|A)$ denotes the probability that $B$ defaults because of the default of $A$.
By considering both factors, we can calculate the node {default probability}. We say a node is more vulnerable if it has higher default probability. The problem is different from the existing research, such as reliability problem and influence maximization problem~\cite{ke2019depth,li2018influence},
which more focus on investigating the reachability for a set of nodes or finding a group of nodes to maximize the influence over the network.
Our problem is of great importance to many real-world applications. Following is a motivating example on financial data analysis.

\begin{figure}[tb!]
  \centering
  \includegraphics[width=0.7\linewidth]{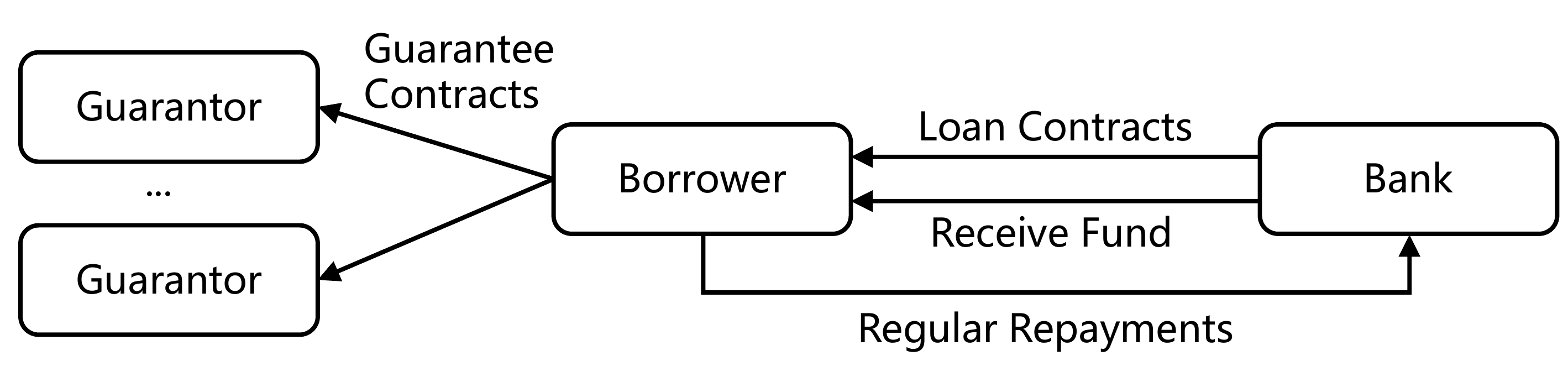}\vspace{-5pt}
  \caption{Guarantee loan process}\label{preliminaries-procedure} \vspace{-3pt}
\end{figure}

\myparagraph{Motivating application} Network-guaranteed loan (also known as guarantee circle) is a widespread economic phenomenon, and attracting increasing attention from the banks, financial regulatory authorities, governments, etc.
In order to obtain loans from banks, groups of small and medium enterprises (SMEs) back each other to enhance their financial security. Figure~\ref{preliminaries-procedure} shows the flow of guarantee loan procedure. When more and more enterprises are involved, they form complex directed-network structures~\cite{cheng2020delinquent}.
Figure~\ref{realnetwork} illustrates a guaranteed-loan network with around $3,000$ enterprises and $7,000$ guarantee relations, where a node represents a small or medium enterprise and a directed edge from node $A$ to node $B$ indicates that enterprise $B$ guarantees another enterprise $A$. Thousands of guaranteed-loan networks of different complexities have coexisted for a long period~\cite{jian2012determinants}. It requires an efficient strategy to prevent, identify and dismantle systematic crises.

\begin{figure}[!tb]
  \centering
  \includegraphics[width=0.7\linewidth]{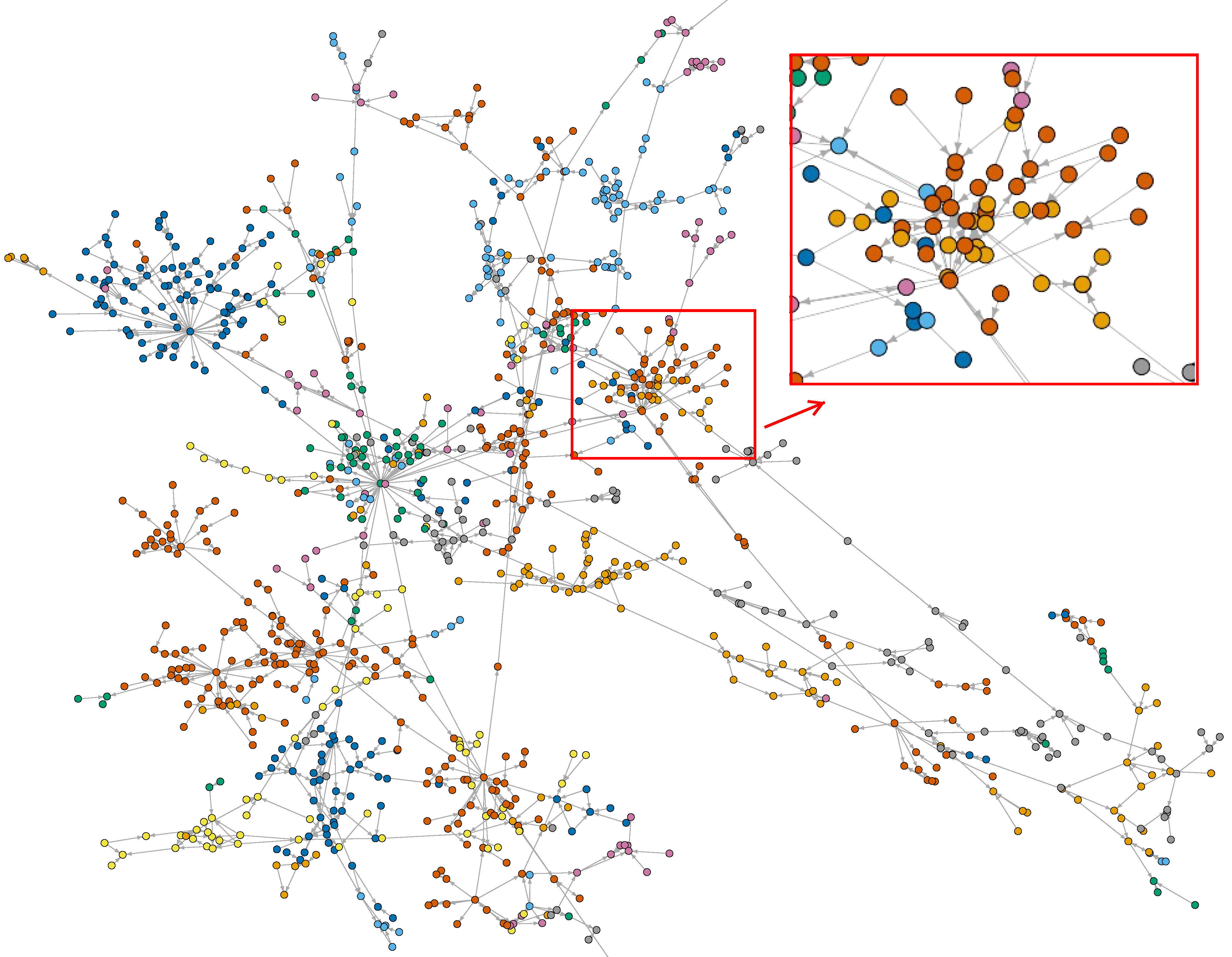}\vspace{-5pt}
  \caption{A real-world guaranteed-loan network}
  \label{realnetwork} \vspace{-3pt}
\end{figure}

Many kinds of risk factors have emerged throughout the guaranteed-loan network, which may accelerate the transmission and amplification of risk. The guarantee network may be alienated from the ``mutual aid group'' as a ``breach of contract''. An appropriate guarantee union may reduce the default risk, but significant contagion damage throughout the networked enterprises could still occur in practice~\cite{mcmahon2014loan}. The guaranteed loan is a debt obligation promise. If one corporation gets trapped in risks, it will spread the contagion to other corporations in the network.
When defaults diffuse across the network, a systemic financial crisis may occur.
It is essential to consider the {contagion damage} in the guaranteed-loan networks.
We can model a guaranteed-loan network with our uncertain graph model, where each node has a self-risk probability and each edge has a risk diffusion probability. Figure~\ref{fig:prob_graph}(a) is toy example of guaranteed-loan network, where Figure~\ref{fig:prob_graph}(e) shows the corresponding relationships between different enterprises and the risk diffusion probabilities.
It is desirable to efficiently identify the $k$ {most vulnerable nodes}, i.e., enterprises with high default risks,
such that the banks or the financial regulatory authorities can pay extra attention to them for the purpose of risk management. It is more urgent than ever before with the slowdown of the economics worldwide nowadays.

In the literature, some machine-learning based approaches (e.g.,~\cite{DBLP:conf/ijcai/ChengTMN019}) have been proposed to predict the node default risk for different applications. For instance, a high-order and graph attention based network representation method has been designed in~\cite{DBLP:conf/ijcai/ChengTMN019} to infer the possibility of loan default events. These approaches indeed consider the structure of networks. However, they cannot properly capture the {uncertain} nature of contagious behaviors in the networks. In the experiment, we also compare with these approaches to demonstrate the advantages of our  model.

\vspace{1mm} \noindent \textbf{Our approach.}
In this paper, to identify the top-$k$ vulnerable nodes, we model the problem with an uncertain graph, and infer the default probability of a node following the possible world semantics,
which has been widely used to capture the contagious phenomenon in real networks~\cite{DBLP:conf/sigmod/AbiteboulKG87,niu2020iconviz,DBLP:conf/ijcai/ChengWZ020,DBLP:conf/kdd/KempeKT03}.
In particular, we utilize
an uncertain graph with two types of probabilities to model the occurrence and prorogation of the default risks in the network.
Note that, we focus on identifying vulnerable nodes for a given network, while the self-risk probabilities and diffusion probabilities can be obtained based on the existing works (e.g.,~\cite{cheng2018prediction,DBLP:conf/ijcai/ChengTMN019}).

\begin{figure}
  \centering
  \includegraphics[width=0.7\linewidth]{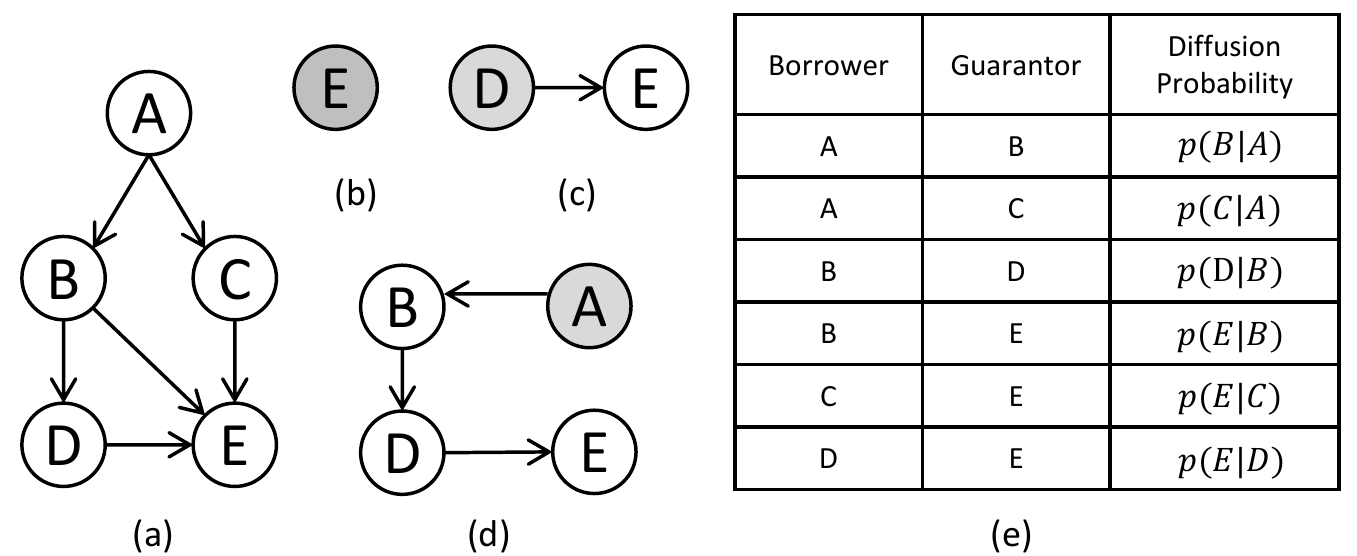}\vspace{-0pt}
  \caption{Example of uncertain guaranteed-loan network}
  \label{fig:prob_graph}\vspace{-0pt}
\end{figure}

Specifically,
Figures~\ref{fig:prob_graph}(a) and~\ref{fig:prob_graph}(e) illustrate the structure of a toy uncertain graph with $5$ nodes and $6$ edges, as well as the associated self-risk probabilities and diffusion probabilities.
Given the probabilistic graph $\mathcal{G}$, we may derive the \textit{default probability} of a node following the possible world semantics, where each possible world (i.e., \textit{instance graph} in this paper) corresponds to a subgraph (i.e., possible occurrence) of $\mathcal{G}$.
Figures~\ref{fig:prob_graph}(b)-(d) are three example possible worlds of  Figure~\ref{fig:prob_graph}(a).
In each possible world, a node exits if it defaults, and an edge $(A,B)$ appears if the default of $A$ indeed leads to the default of $B$.
Taking node $E$ as an example, it may default because of $(i)$ itself, which is represented by a shaded node as shown in Figure~\ref{fig:prob_graph}(b), or because of $(ii)$ the contagion damage initiated by other nodes as shown in Figures~\ref{fig:prob_graph}(c)-(d).
In Section~\ref{sec:preliminary}, we will formally introduce how to derive the default probabilities of nodes.

In this paper, we show that the problem of calculating the default probability of a node alone is \#P-hard,
not mentioning the top-$k$ vulnerable nodes identification problem.
A straightforward solution for the top-$k$ vulnerable nodes computation is to
enumerate all possible worlds and then aggregate the results in each possible world.
However, it is computational prohibitive, since the number of possible worlds of an uncertain graph
may be up to $2^{n+m}$, where $n$ and $m$ are the number of nodes and edges in the graph, respectively.
In this paper, we first show that we can identify the top-$k$ vulnerable nodes by using a limited
number of sampled instance graphs with tight theoretical guarantee.
To reduce the sample size required and speedup the computation, lower/upper bounds based pruning strategies and advanced sampling method are developed.
In addition, to further accelerate the computation, a bottom-$k$ sketch based method is proposed.
To verify the performance in real scenarios, we integrate the proposed techniques with our current loan risk control system, which is deployed in the collaborated bank.

\vspace{1mm}
\noindent \textbf{Contributions.}
The principle contributions of this paper are summarized as follows.

\begin{itemize}
  \item We advocate the problem of top-$k$ vulnerable nodes detection in uncertain graphs, which is essential in real-world applications.

  \item Due to the hardness of the problem, a sampling based method is developed with tight theoretical guarantee.

  \item We develop effective lower and upper bound techniques to prune the searching space and reduce the sample size required. Advanced sampling method is designed to speed up the computation with rigorous theoretical analysis.

  \item A bottom-$k$ sketch based approach is further proposed, which can greatly speedup the computation with competitive results.


  \item We conduct extensive experiments to evaluate the efficiency and effectiveness of our proposed algorithms on $3$ real financial networks and $5$ benchmark networks.

  \item To further verify the advantages of our models in real scenarios, the proposed techniques are integrated into our current loan risk control system, which is deployed in the collaborated bank.
  Through the case study on real-life financial environment, it verifies that our proposed model can significantly improve the accuracy for high-risky enterprises prediction.

\end{itemize}

\vspace{1mm}
\noindent \textbf{Roadmap.} The rest of the paper is organized as follows. Section~\ref{sec:preliminary} describes the problem studied and the related techniques used in the paper. Section~\ref{sec:ours} shows the basic sampling-based method and our optimized algorithms. We report the experiment results in Section~\ref{sec:experiment}.
Section~\ref{sec:dep} introduces the system implementation details and case study.
We present the related work in Section~\ref{subsec:related_work} and conclude the paper in Section~\ref{sec:conclusion}.

\begin{table}\vspace{-0pt}
  \centering
\caption{Summary of notations}\label{tb:notations}
    \begin{tabular}{|c|l|}
      \hline      	
      \textbf{Notation} & \textbf{Definition} \\ \hline \hline
      $\mathcal{G}=(\mathcal{V},\mathcal{E})$ & uncertain graph \\ \hline
      $\mathcal{V}$/$\mathcal{E}$ & node/edge set of $\mathcal{G}$ \\ \hline
      $p_s(v)$ & the self-risk probability of $v$\\ \hline
      $p(v_i|v_j)$ & the diffusion probability\\ \hline
      $p(v)$ & the default probability of $v$\\ \hline
      $h(x)$ & a truly random hash function \\ \hline
      $\mathcal{L}(A,k)$ & the $k$-th smallest hash value of the set $A$  \\ \hline	
      $bk$ & the parameter $k$ in the bottom-$k$ sketch \\ \hline
      $N(v)$ & the set of in-neighbors of node $v$ \\ \hline
      $\mathcal{A}$ &  an approximation algorithm for the problem \\ \hline
      $\mathcal{R}$ &  the set of $k$ nodes returned by $\mathcal{A}$ \\ \hline
      $P^k$ &  the default probability of rank $k$-th nodes \\ \hline
      $(\epsilon,\delta)$ &  the parameters in $(\epsilon,\delta)$-approximation \\\hline
        $t$ &  the sample size \\ \hline
      $p_l(v)$, $p_u(v)$ &  the lower and upper bound of $p(v)$ \\ \hline
      $T_l$, $T_u$ &  the $k$-th largest value of $p_l(v)$ and $p_u(v)$ \\ \hline
      $\mathcal{B}$ &  the set of candidates \\ \hline
      $\mathcal{G}^t$ &  graph by reversing the direction of edges in $\mathcal{G}$ \\ \hline
   \end{tabular}
\vspace{-0pt}
\end{table}

\section{Preliminaries}
\label{sec:preliminary}

In this section, we first we present some key concepts and formally define the problem. Then, we introduce the related techniques used. Table~\ref{tb:notations} summarizes the notations frequently used throughout this paper.

\subsection{Problem Definition}
\label{subsec:key_concept_definition}

We consider an uncertain graph  $\mathcal{G}=(\mathcal{V},\mathcal{E})$ as a directed graph, where $\mathcal{V}$ is the set of nodes and $\mathcal{E}$ denotes the set of edges. $n = |\mathcal{V}|$ (resp. $m = |\mathcal{E}|$) is the number of nodes (resp. edges) in $\mathcal{G}$. Each node $v \in \mathcal{V}$ is associated with a \textbf{self-risk probability} $p_s(v)$, which denotes the default probability of $v$ caused by self-factors. Each edge $(u,v) \in \mathcal{E}$ is associated with a \textbf{diffusion probability} $p(v|u)$, which denotes the probability of $v$'s default caused by $u$'s default.
In this paper, we assume the self-risk probabilities and diffusion probabilities are already available. These probabilities can be derived based the previous studies (e.g.,~\cite{DBLP:conf/ijcai/ChengTMN019}).

For simplicity, when there is no ambiguity, we use uncertain graph, graph and network interchangeably.
In this paper, we derive the default probability of a node by considering both self-risk probability and diffusion probability, which is defined as follows.

\begin{definition}
[Default Probability]
Given an uncertain graph $\mathcal{G}=(\mathcal{V},\mathcal{E})$, for each node $v \in \mathcal{V}$, its default probability, denoted by $p(v)$, is obtained by considering both self-risks probability
and diffusion probability. $p(v)$ can be computed recursively as follows.
\begin{equation}
\label{eqn:dp}
p(v) = 1-(1-p_s(v))\left(\prod_{ ~all~ x \in N(v)} \left(1-p(v|x)p(x)\right)\right)
\end{equation}
where $N(v)$ is the set of in-neighbors of $v$.
\end{definition}

It is easy to verify that the equation above is equal to aggregate the probability over all the possible worlds, i.e.,
\begin{equation*}
p(v) = \sum_{W \in \mathcal{W}} p(W) \times I_W(v)
\end{equation*}
where $\mathcal{W}$ is the set of all possible worlds, $p(W)$ is the probability of a possible world $W$,
and $I_W(v)$ is an indicator function denoting if $v$ defaults in $W$ or not.

\begin{example}
Reconsider the graph in Figure~\ref{fig:prob_graph}. Suppose the self-risk probabilities and diffusion probabilities are all 0.2 for each node and edge. Then, we have $p(A) = 0.2$ and $p(B) = 1- (1-p_s(B))(1 - 0.2 \times p(A))=0.232$.
\end{example}



\myparagraph{Problem statement} Given an uncertain graph $\mathcal{G}=(\mathcal{V},\mathcal{E})$, we aim to develop efficient algorithms to identify the set $\mathcal{R}$ of top-$k$ vulnerable nodes, i.e., the $k$ nodes with the highest default probability.


\begin{theorem}
\label{th:hardness}
It is \#P-hard to compute the default probability.
\end{theorem}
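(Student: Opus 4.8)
The plan is to establish \#P-hardness by a reduction from the two-terminal (directed $s$-$t$) network reliability problem, which is a classical \#P-complete problem. The first step is to reinterpret the default probability through the possible-world expansion stated immediately after the definition of $p(v)$: drawing a possible world amounts to independently deciding, for each node $u$, whether it self-defaults (with probability $p_s(u)$) and, for each edge $(u,w)$, whether it is \emph{active} (with probability $p(w|u)$). In a fixed world, a node $v$ defaults exactly when there is a directed path of active edges from some self-defaulting node to $v$. Hence $p(v)$ equals the probability that $v$ is reachable, via active edges, from the random set of self-defaulting sources. I would carry out the argument using this possible-world characterization rather than the recursive Equation~(\ref{eqn:dp}), since it is well defined on arbitrary (possibly cyclic) directed graphs and makes the reachability correspondence transparent.

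With this interpretation in hand, the reduction is as follows. Given an instance of $s$-$t$ reliability, consisting of a directed graph $G=(V,E)$, independent edge-presence probabilities $q_{(u,w)}$, and designated source $s$ and target $t$, I construct an uncertain graph $\mathcal{G}'$ on the same vertex and edge sets by setting $p_s(s)=1$, $p_s(v)=0$ for every $v\neq s$, and $p(w|u)=q_{(u,w)}$ for every edge $(u,w)$. Because $s$ is then the unique node that can default on its own while every other node defaults only through contagion, the event ``$t$ defaults in $\mathcal{G}'$'' coincides exactly with the event ``there is an active path from $s$ to $t$.'' Consequently the default probability $p(t)$ in $\mathcal{G}'$ equals the $s$-$t$ reliability of the original instance. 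Since the construction is plainly polynomial and all assigned probabilities remain valid, any polynomial-time algorithm for computing default probabilities would solve $s$-$t$ reliability, which yields the claimed \#P-hardness.

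The step I expect to require the most care is the equivalence argument in the first paragraph, namely pinning down precisely which quantity the theorem is about and verifying the reachability correspondence rigorously. The correspondence ``$t$ defaults $\iff$ active $s$-to-$t$ path exists'' must be checked in both directions under the source/sink choices above, and I would be careful to state the target quantity as the possible-world default probability so that the proof applies uniformly to graphs with shared ancestors or cycles, where a naive in-neighbor independence assumption could otherwise introduce correlation artifacts. A secondary, easier point is the choice of source problem: $s$-$t$ reliability is the most convenient because it shares the graph structure of our model, but one could alternatively reduce from the \#P-hard problem of computing the probability that a monotone DNF formula is satisfied, encoding each clause as a contagion gadget feeding into $t$.
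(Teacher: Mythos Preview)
Your proposal is correct and matches the paper's own argument almost exactly: the paper also fixes a single node with self-risk probability $1$ and all others with self-risk probability $0$, then observes that the default probability of any other node equals the two-terminal reliability from the distinguished source, invoking the known \#P-hardness of that problem. Your write-up is more careful than the paper's in spelling out the possible-world reachability characterization and the need to handle cycles/shared ancestors, but the underlying reduction is the same.
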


\begin{proof}
We show the hardness of the problem by considering a simple case, where the self-risk probability $p_s(v)$ equals 1 for node $v$, and $p_s(u)$ equals 0 for $u\in \mathcal{V}\setminus v$. Therefore, for the node $u\in \mathcal{V}\setminus v$, the default probability $p(u)$ is only caused by the default of node $v$. Then, the default probability of $p(u)$ equals the reliability from $v$ to $u$, which is \#P-hard to compute~\cite{khan2014fast}. Thus, it is \#P-hard to compute the default probability. The theorem is correct.
\end{proof}

\subsection{Bottom-k Sketch}
In this section, we briefly introduce the bottom-$k$ sketch~\cite{cohen2007summarizing}, which is used in our \ours framework to obtain the statistics information for early stopping condition. Bottom-$k$ sketch is designed for estimating the number of distinct values in a multiset. Given a multiset $A = \{v_1, v_2, \cdots , v_n\}$ and a truly random hash function $h$, each distinct value $v_i$ in the set $A$ is hashed to $(0, 1)$ and $h(v_i) \neq h(v_j)$ for $i \neq j$.
The bottom-$k$ sketch consists of the $k$ smallest hash values,
i.e., $\mathcal{L}(A) = \{h(v_i) | h(v_i) \leq \mathcal{L}(A,k) \wedge v_i \in A\}$,
where $\mathcal{L}(A,k)$ is the $k$-th smallest hash value.
So the number of distinct value can be estimated with
$\frac{k-1}{\mathcal{L}(A,k)}$.
The estimation can converge fast with the increase of $k$, where
the expected relative error is $\sqrt{2/{\pi(k-2)}}$ and the coefficient variation is no more than $1/\sqrt{k-2}$.
To distinguish from the $k$ in the top-$k$ problem, hereafter in this paper, we use $bk$ to denote the parameter $k$ in the bottom-$k$ sketch.

\section{Solutions}
\label{sec:ours}

In this section, we first present a basic sampling method and analyze the sample size required. Then, we introduce the optimized approaches to accelerate the processing.

\subsection{Basic Sampling Approach}
\label{subsec:sampling}

Due to the hardness of computing the default probability, in this section, we propose a sampling based method. In order to bound the worst case performance, rigorous theoretical analysis is conducted about the required sample size.

\myparagraph{Sampling framework}
To compute the default probability, we can enumerate all the possible worlds and aggregate the results. However, the possible world space is usually very large. In previous research, sampling based methods are widely adopted for this case.
That is, we randomly select a set of possible worlds and take the average value as the estimated default probability.
By carefully choosing the sample size, we can return a result with performance guarantee.

Algorithm~\ref{alg:sampling} shows the details of the basic sampling based method. The input is a given graph, where each node/edge is associated with a self-risk/diffusion probability. In each iteration, we generate a random number for each node to determine if it defaults by itself or not (Lines 4-7). Then, we conduct a breath first search from these nodes, i.e., $h_v = 1$, to locate the nodes that will be influenced by them in the current simulation. For each encountered edge, we generate a random number to decide if the propagation will continue or not. For each node, the number of default times is accumulated in Lines 21. The final default probability is calculated by taking an average over the accumulated value $v^c$. Finally, the algorithm returns $k$ results with the largest estimated value.

\begin{example}
Reconsider the example in Figure~\ref{fig:prob_graph}.
Suppose Figures~\ref{fig:prob_graph}(b)-\ref{fig:prob_graph}(d) are the three sampled graphs, and nodes $E,D,A$ are the ones default by itself. Assuming $k=2$, then, nodes $E$ and $D$ are returned results with the largest estimated default probabilities.
\end{example}

\begin{algorithm}[t]
\SetVline 
\SetFuncSty{textsf}
\SetArgSty{textsf}
\small
\caption{Basic Sampling Approach}
\label{alg:sampling}
\Input
{
   $\mathcal{G}:$ a given graph, $k$: a positive integer, $t$: sample size
}
\Output{$\mathcal{R}$: collection of top-$k$ vulnerable nodes}
\Indp
\State{$u_v:= 0$ for all nodes $v \in \mathcal{V}$ }
\For{$i$ in $1$ to $t$}
{
    \State{$h_v:= 0$ for all nodes $v \in \mathcal{V}$}
    \ForEach{$v \in \mathcal{V}$}
    {
      \State{ generate a random number $r_v$ in $[0,1]$}
      \If{ $r_v \leqslant p_s(v)$ }
      {
         \State{$h_v := 1$}
      }
    }

    \State{$\mathcal{Q} \gets \{v|h_v=1\}$ }
    \State{mark $\mathcal{V}\setminus \mathcal{Q}$ as \emph{unvisited} }

    \While{$\mathcal{Q} \neq \emptyset$}
    {
      \State{$v_q \gets \mathcal{Q}.pop()$ }
      \ForEach{$v_a \in N(v_q)$ }
      {
        \If{$v_a$ is \emph{unvisited}}
        {
          \State{generate a random number $r_e$ in $[0,1]$ }
          \If{$r_e > p(v_a|v_q)$}
          {
            \State{continue }
          }
          \State{$h_{v_a}:= 1$ }
          \State{mark $v_a$ as \emph{visited} }
          \State{push $v_a$ to $\mathcal{Q}$ }
        }
      }
    }
    \ForEach{$v \in \mathcal{V}$}
    {
      \State{$v^c:= v^c + h_v$ }
    }
}
\State{$\mathcal{R} \gets$ the top-$k$ nodes ordered by $p_v = v^c/t$ }
\Return{ $\mathcal{R}$ }
\end{algorithm}

\myparagraph{Sample size analysis}
For sampling based methods, a critical problem is to determine the sample size required
in order to bound the quality of returned result.
In this section, we conduct rigorous theoretical analysis about the sample size required. Specifically, we say an algorithm $\mathcal{A}$ is $(\epsilon, \delta)$-approximation if the following conditions hold.

\begin{definition} [{$(\epsilon, \delta)$-approximation}]
Given an approximation algorithm $\mathcal{A}$ for the top-$k$ problem studied, let $\mathcal{R}$ be the set of $k$ nodes returned by $\mathcal{A}$. $P^k$ is the default probability of the $k$-th node in the ground truth order.
Given $\epsilon, \delta \in (0,1)$, we say $\mathcal{A}$ is $(\epsilon, \delta)$-approximation if $\mathcal{R}$ fulfills the following conditions with at least $1-\delta$ probability.
\begin{itemize}
\item [1)] For $v \in \mathcal{R}$, $p(v) \geq P^k - \epsilon$;
\item [2)] For $v \notin \mathcal{R}$, $p(v) < P^k + \epsilon$.
\end{itemize}
\end{definition}

If $\mathcal{A}$ is a $(\epsilon, \delta)$-approximation algorithm for the vulnerable node detection problem, it means that with high probability $(i)$ the default probabilities of returned nodes are at least $P_v^k - \epsilon$; $(ii)$ for the nodes not in $\mathcal{R}$, the default probabilities are at most $P^k + \epsilon$.

\begin{theorem} [Hoeffding Inequality]
Given a sample size $t$ and $\epsilon > 0$, let ${p_v}$ be the unbiased estimation of $p(v)$, where $p_v = \frac{1}{t}\sum_{i=1}^{t}p_v^i$ and $p_v^i \in [a^i, b^i]$. Then we have
\begin{equation}
Pr[{p_v} - p(v) \geq \epsilon]  \leq \exp(-\frac{2t^2\epsilon^2}{\sum_{i=1}^{t}(a^i - b^i)^2})
\end{equation}
\end{theorem}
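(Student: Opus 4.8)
The plan is to prove this via the standard Chernoff / exponential-moment method, treating the samples $p_v^i$ as independent bounded random variables. Since $p_v = \frac{1}{t}\sum_{i=1}^t p_v^i$ is an unbiased estimator, we have $p(v) = \mathbb{E}[p_v]$, so writing $Y_i = p_v^i - \mathbb{E}[p_v^i]$ the target event becomes
\[
  \{p_v - p(v) \geq \epsilon\} = \Big\{\sum_{i=1}^t Y_i \geq t\epsilon\Big\}.
\]
Each $Y_i$ has mean zero and lies in an interval of width $b^i - a^i$, which is the only structural fact we need.

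First I would introduce a free parameter $s > 0$ and apply Markov's inequality to the exponentiated sum:
\[
  Pr\Big[\sum_{i=1}^t Y_i \geq t\epsilon\Big] \leq e^{-st\epsilon}\,\mathbb{E}\Big[\exp\Big(s\sum_{i=1}^t Y_i\Big)\Big].
\]
Using independence of the samples, the moment generating function factorizes as $\prod_{i=1}^t \mathbb{E}[e^{sY_i}]$, reducing the problem to bounding the MGF of a single centered, bounded variable.

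The key step is Hoeffding's lemma: for a zero-mean random variable $Y \in [c,d]$ and any $s$, one has $\mathbb{E}[e^{sY}] \leq \exp(s^2(d-c)^2/8)$. Applying this to each $Y_i \in [a^i - \mathbb{E}[p_v^i],\, b^i - \mathbb{E}[p_v^i]]$ yields
\[
  Pr\Big[\sum_{i=1}^t Y_i \geq t\epsilon\Big] \leq \exp\Big(-st\epsilon + \frac{s^2}{8}\sum_{i=1}^t (b^i - a^i)^2\Big).
\]
It then remains to optimize over $s$: writing $D = \sum_{i=1}^t (b^i - a^i)^2$, the quadratic $-st\epsilon + s^2 D/8$ is minimized at $s^\star = 4t\epsilon/D$, and substituting back gives the exponent $-2t^2\epsilon^2/D$, which is exactly the claimed bound since $(a^i-b^i)^2 = (b^i-a^i)^2$.

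The main obstacle is establishing Hoeffding's lemma itself: this requires exploiting the convexity of the exponential to dominate $e^{sy}$ on $[c,d]$ by the chord through its endpoints, and then performing a second-order Taylor expansion of the logarithm of the resulting expectation to extract the factor $(d-c)^2/8$. The remaining pieces—Markov's inequality, the independence factorization, and the one-dimensional minimization over $s$—are routine and introduce no real difficulty.
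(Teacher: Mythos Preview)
Your argument is correct and is exactly the standard Chernoff--Hoeffding derivation: Markov on the exponentiated sum, factorization by independence, Hoeffding's lemma for each centered bounded term, and optimization over the exponential tilt parameter. There is no gap.

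However, note that the paper does \emph{not} actually prove this theorem: it is stated as a classical result (the Hoeffding inequality) and then immediately used as a black box in the proof of the subsequent theorem bounding $Pr[p_u - p_v > 0]$. So there is no ``paper's own proof'' to compare against here; you have supplied a self-contained proof where the authors simply cite the inequality. Your write-up would serve well as an appendix justification, but for the purposes of matching the paper, a one-line citation of Hoeffding's 1963 paper would suffice.
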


Based on the Hoeffding inequality, we have following theorem hold.

\begin{theorem} \label{th:middle}
Given the sample size $t$, $\epsilon > 0$ and two nodes $u,v \in \mathcal{V}$, if $p(v) - p(u) \geq \epsilon$, then
\begin{equation*}
Pr[{p_u} - {p_v} > 0] \leq \exp(-t\epsilon^2/2)
\end{equation*}
\end{theorem}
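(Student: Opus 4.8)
The plan is to reduce the claim to a single application of the Hoeffding inequality stated above, applied to the per-sample difference of the two nodes' default indicators. First I would introduce, for each sampled instance graph $i \in \{1,\dots,t\}$, the random variable $X^i := p_u^i - p_v^i$, where $p_u^i, p_v^i \in \{0,1\}$ are the indicators of whether $u$ (resp.\ $v$) defaults in the $i$-th sample, as computed by Algorithm~\ref{alg:sampling} (the accumulated $h_v$ values). Since each indicator is $0$ or $1$, we have $X^i \in [-1,1]$, so in the notation of the Hoeffding statement $a^i = -1$ and $b^i = 1$. The samples are drawn independently, so the $X^i$ are independent and the hypotheses of the inequality are met.

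Next I would pass to the sample mean $X := \frac{1}{t}\sum_{i=1}^{t} X^i = p_u - p_v$, whose expectation is $E[X] = p(u) - p(v)$. By the assumption $p(v) - p(u) \geq \epsilon$ we get $E[X] \leq -\epsilon$, equivalently $-E[X] \geq \epsilon$. The target event is $\{p_u - p_v > 0\} = \{X - E[X] > -E[X]\}$, and since $-E[X] \geq \epsilon$ this event is contained in $\{X - E[X] \geq \epsilon\}$. Hence $Pr[p_u - p_v > 0] \leq Pr[X - E[X] \geq \epsilon]$, reducing the one-sided comparison to an upper-tail deviation of a sample mean from its expectation.

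Finally I would invoke the Hoeffding inequality on $X$ with the range $[a^i,b^i] = [-1,1]$, so that $\sum_{i=1}^{t}(a^i - b^i)^2 = 4t$ and the bound becomes $\exp\!\left(-\tfrac{2t^2\epsilon^2}{4t}\right) = \exp(-t\epsilon^2/2)$, which is exactly the claimed estimate. The steps that require care are (i) using the width-$2$ range of the \emph{difference} $X^i$ rather than the width-$1$ range of a single indicator, since mistaking this changes the constant in the exponent, and (ii) the centering/direction step that converts the strict event $p_u > p_v$ into a deviation above the negative mean $E[X]$. Neither is a genuine analytic obstacle; once the centering is set up correctly, the result follows directly from the stated inequality.
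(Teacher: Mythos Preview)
Your proposal is correct and follows essentially the same approach as the paper: treat $p_u - p_v$ as an unbiased estimator of $p(u)-p(v)$, use the assumption to convert $\{p_u - p_v > 0\}$ into a deviation-from-mean event of size $\epsilon$, and apply Hoeffding with the per-sample difference ranging in $[-1,1]$ so that $\sum_i (b^i-a^i)^2 = 4t$. Your write-up is in fact a bit more explicit about the centering step and the width-$2$ range than the paper's own proof.
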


\begin{proof}
We have
\begin{align*}
& Pr[{p_u} - {p_v} > 0] \\
\leq ~& Pr[{p_u} - {p_v} \geq p(u) - p(v) - \epsilon]\\
=~& Pr[{p_u} - {p_v} - (p(u) - p(v)) \geq \epsilon] \\
\leq~& \exp(-\frac{2t^2\epsilon^2}{\sum_{i=1}^{t}2^2})\\
=~&\exp(-t\epsilon^2/2)
\end{align*}
The last two steps consider ${p_u} - {p_v}$ as the estimator of $p(u) - p(v)$. In addition, $p_u^i - p_v^i \in [-1,1]$. Then, we can
feed into the Hoeffding inequality and obtain the final result.
\end{proof}

As shown in Theorem~\ref{th:sample_size}, Algorithm~\ref{alg:sampling} is a $(\epsilon, \delta)$-approximation algorithm if the sample size is no less than Equation~\ref{eq:size}.

\begin{theorem} \label{th:sample_size}
Algorithm~\ref{alg:sampling} is $(\epsilon, \delta)$-approximation if the sample size is no less than
\begin{equation}\label{eq:size}
t = \frac{2}{\epsilon^2}\ln{\frac{k(n-k)}{\delta}}
\end{equation}
where $n$ is the number of nodes, i.e., $|\mathcal{V}|$.
\end{theorem}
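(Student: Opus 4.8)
The plan is to reduce the top-$k$ correctness guarantee to a family of pairwise comparisons and then apply a union bound together with Theorem~\ref{th:middle}. First I would fix a ground-truth top-$k$ set $T$ (any $k$ nodes attaining the $k$ largest default probabilities, so that $p(u)\ge P^k$ for every $u\in T$) and its complement $B=\mathcal{V}\setminus T$ (so that $p(w)\le P^k$ for every $w\in B$), with $|T|=k$ and $|B|=n-k$. The per-node estimate $p_v=v^c/t$ produced by Algorithm~\ref{alg:sampling} is an average of $t$ independent indicators $h_v^{i}\in\{0,1\}$, each equal to whether $v$ defaults in an independently sampled possible world; by the possible-world characterization of $p(v)$ this estimate is unbiased, and the per-sample difference $p_u^{i}-p_v^{i}$ lies in $[-1,1]$, which is exactly the range assumed in Theorem~\ref{th:middle}.

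The combinatorial heart of the argument is to show that whenever Algorithm~\ref{alg:sampling} violates the $(\epsilon,\delta)$-approximation definition, some ``high'' node in $T$ and some ``low'' node in $B$ whose true probabilities differ by at least $\epsilon$ must have their estimated order reversed. If condition~1 fails, there is $v\in\mathcal{R}$ with $p(v)<P^k-\epsilon$; such $v$ lies in $B$, so it occupies a top-$k$ estimated slot that, by a counting argument ($|\mathcal{R}|=|T|=k$), forces some $u\in T\setminus\mathcal{R}$. Then $p(u)-p(v)>\epsilon$ while the ranking by estimates gives $p_v\ge p_u$. Symmetrically, if condition~2 fails there is $w\notin\mathcal{R}$ with $p(w)\ge P^k+\epsilon$, which forces $w\in T$ and, since $w$ is displaced, some $v\in B\cap\mathcal{R}$ with $p(w)-p(v)\ge\epsilon$ yet $p_v\ge p_w$. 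In either case the witnessing pair lies in $T\times B$, has true gap at least $\epsilon$, and has reversed estimated order. I expect this pigeonhole step---correctly matching a misplaced node to a specific partner across the $T/B$ boundary in both failure directions---to be the main obstacle, together with the minor technicality of handling estimate ties, which I would resolve by the tie-breaking convention used when forming $\mathcal{R}$ so that the reversal event is contained in $\{p_w-p_u>0\}$.

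Finally I would union-bound over the at most $k(n-k)$ candidate pairs $(u,w)\in T\times B$ with $p(u)-p(w)\ge\epsilon$. For each such pair Theorem~\ref{th:middle} bounds the reversal probability $\Pr[p_w-p_u>0]$ by $\exp(-t\epsilon^2/2)$, so the total failure probability is at most $k(n-k)\exp(-t\epsilon^2/2)$. Imposing $k(n-k)\exp(-t\epsilon^2/2)\le\delta$ and solving for $t$ yields $t\ge\frac{2}{\epsilon^2}\ln\frac{k(n-k)}{\delta}$, which is exactly Equation~\ref{eq:size}; hence for any sample size meeting this bound both approximation conditions hold simultaneously with probability at least $1-\delta$.
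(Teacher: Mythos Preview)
Your proposal is correct and follows essentially the same approach as the paper: reduce the $(\epsilon,\delta)$-approximation guarantee to the event that no pair $(u,w)$ with $u$ in the true top-$k$, $w$ outside it, and $p(u)-p(w)\ge\epsilon$ has its estimated order reversed, then apply Theorem~\ref{th:middle} to each of the at most $k(n-k)$ such pairs and take a union bound. Your combinatorial/pigeonhole argument for producing the witnessing pair in each failure case is in fact spelled out a bit more carefully than the paper's version, and your explicit remark on tie-breaking is a detail the paper omits.
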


\begin{proof}
Suppose we sort the nodes based on their real default probabilities, i.e., $\{v_1,v_2,...,v_n\}$. Then we show the two conditions in $(\epsilon, \delta)$-approximation hold if we have $p_{v_i} - p_{v_j} > 0$ with $p(v_i) - p(v_j) \geq \epsilon$ for $1\leq i \leq k < j \leq n$.
\begin{itemize}
\item For a node $v \in \mathcal{R}$, if $p(v) < P^k -\epsilon < p(v_i) - \epsilon$, it means $p_{v_i} - p_v > 0$ for $i\in [1,k]$. Therefore, $v$ will not be selected into $\mathcal{R}$, which is contradict to the assumption. Thus, the first condition holds.

\item For $v \notin \mathcal{R}$, if $v$ does not belong to the top-$k$ result, the second condition holds naturally. Otherwise, there must be a node $u$ that does not belong to the top-$k$ result being selected into $\mathcal{R}$.
If $p(v) \geq P^k + \epsilon \geq p(u) + \epsilon$, it means $p_v - p_u > 0$. Therefore, $v$ should also be selected into the top-$k$, which is contradict to the assumption. Thus, the second condition holds.
\end{itemize}

Theorem~\ref{th:middle} shows the theoretical result of bounding the order of a pair of nodes.
Since $1\leq i \leq k < j \leq n$, we need to bound the order of $k(n-k)$ pairs of nodes. By applying union bound and Theorem~\ref{th:middle}, we have
\begin{align*}
& \delta = k(n-k)\exp(-t\epsilon^2/2)\\
\Rightarrow & t = \frac{2}{\epsilon^2}\ln{\frac{k(n-k)}{\delta}}
\end{align*}
Therefore, the theorem is correct.
\end{proof}

\subsection{Optimized Sampling Approach}
\label{subsec:optimized}

Based on Theorem~\ref{th:sample_size}, Algorithm~\ref{alg:sampling} can return a result with tight theoretical guarantee. However, it still suffers from some drawbacks, which make it hard to scale for large graphs.
Firstly, to bound the quality of returned results, we need to bound the order of $k(n-k)$ node pairs.
The node size $n$ can be treated as the candidate size, which is usually large in real networks.
Therefore, if we can reduce the size of $n$ (i.e., reduce candidate space) and $k$ (i.e., verify some nodes without estimation), then the sample size can be reduced significantly. Secondly, in each sampled possible world, we only need to determine whether the candidate node can be influenced or not, i.e., compute $h_v$. If the candidate space can be greatly reduced, the previous sampling method may explore a lot of unnecessary space.

According to the intuitions above, in this section, novel methods are developed to derive the lower and upper bounds of the default probability, which are used to reduce the candidate space. In addition, a reverse sampling framework is proposed in order to reduce the searching cost.

\begin{algorithm}[t]
\SetVline 
\SetFuncSty{textsf}
\SetArgSty{textsf}
\small
\caption{Lower Bound Algorithm}
\label{alg:lower}
\Input
{
   $\mathcal{G}:$ a given graph, $z$: the order of bound
}
\Output{ $p_l(v)$:  lower bound of default probability}
\For{$i$ in $1$ to $z$}
{
    \If{$i=1$}
    {
        \State{$p(v) := p_s(v)$ for each $v \in \mathcal{V}$}
        \State{continue}
    }
    \For{each $v$ in $\mathcal{V}$}
    {
      {
      \State{ calculate $p(v)$ by Equitation \ref{eqn:dp} if its in-neighbors' default probabilities have been updated in the previous iteration}
      }
    }
}
\State{ $p_l(v) := p(v)$ for all nodes $v \in \mathcal{V}$}
\Return{ $p_l(v)$ for $v \in \mathcal{V}$ }
\end{algorithm}

\myparagraph{Candidate size reduction}
To compute the lower and upper bounds of the default probability, we utilize the
equation in default probability definition, i.e., Equation~\ref{eqn:dp}.
The idea is that the default probability for each node is in $[p_s(v),1]$ if no further information is given.
By treating each node's default probability as $p_s(v)$ and $1$, we can aggregate the probability over its neighbors to shrink the interval based on Equation~\ref{eqn:dp}. Then, with the newly derived lower and upper bounds for neighbor nodes, we can further aggregate the information and update the bounds.
The details of deriving lower and upper bounds are shown in Algorithms~\ref{alg:lower} and~\ref{alg:upper}.
The algorithms iteratively use the lower and upper bound derived in the previous iteration as the current default probability.
The order of bound denotes the number of iterations conducted.
In each iteration, we only update the bounds of $v$'s default probability if there is any change for the default probability of $v$'s in-neighbors.
Note that, there is a slight difference in the first iteration for the two algorithms, since by using 1 as the default probability will contribute nothing for the pruning.
It is easy to verify that larger order will lead to tighter bounds. Users can make a trade-off between the efficiency and the tightness of bounds.

Given the lower bound and upper bound derived, we can filter some unpromising candidates and verify some candidates with large probability.
Lemma~\ref{le:bound_rule} shows the pruning rules to verify and filter the candidate space.

\begin{lemma} \label{le:bound_rule}
Given the upper and lower bounds derived for each node, let $T_l$ and $T_u$ be the $k$-th largest value in $p_l(v)$ and $p_u(v)$, respectively. Then, we have
\begin{itemize}
\item[1)] For $u \in \mathcal{V}$, $u$ must be in the top-$k$  if $p_l(u) \geq T_u$.
\item[2)] For $u \in \mathcal{V}$, $u$ must not be in the top-$k$  if $p_u(u) < T_l$.
\end{itemize}
\end{lemma}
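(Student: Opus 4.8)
The plan is to establish both pruning rules by a direct argument based on the definition of $T_l$ and $T_u$ as the $k$-th largest lower and upper bounds, together with the sandwiching property $p_l(v) \le p(v) \le p_u(v)$ that Algorithms~\ref{alg:lower} and~\ref{alg:upper} guarantee by construction. The key observation is that these threshold values certify membership (or non-membership) in the top-$k$ without ever computing the exact default probabilities.

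For the first rule, suppose $p_l(u) \geq T_u$. By definition of $T_u$ as the $k$-th largest upper bound, there are at most $k-1$ nodes $w$ with $p_u(w) > T_u$, so all remaining nodes (excluding $u$ itself) satisfy $p_u(w) \le T_u \le p_l(u) \le p(u)$. I would argue that $u$ must belong to the top-$k$: any node $w$ that could outrank $u$ must have $p(w) > p(u) \ge T_u$, hence $p_u(w) \ge p(w) > T_u$, which forces $w$ into the small set of at most $k-1$ nodes whose upper bound strictly exceeds $T_u$. Therefore at most $k-1$ nodes can be strictly more vulnerable than $u$, placing $u$ among the $k$ most vulnerable.

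For the second rule, suppose $p_u(u) < T_l$. By definition of $T_l$ as the $k$-th largest lower bound, there are at least $k$ nodes $w$ with $p_l(w) \ge T_l > p_u(u) \ge p(u)$, and for each such $w$ we have $p(w) \ge p_l(w) > p(u)$. Thus there exist at least $k$ distinct nodes each strictly more vulnerable than $u$, so $u$ cannot be in the top-$k$. Both directions reduce to carefully counting how many nodes can ``beat'' $u$ and comparing that count against $k$.

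The main obstacle is not the counting logic, which is elementary, but correctly handling ties and the precise interpretation of ``$k$-th largest'' when multiple nodes share the threshold value, so that the counts of nodes above the threshold are exactly bounded. I would therefore be explicit about whether the top-$k$ set is uniquely defined and state the convention for breaking ties, ensuring the strict-versus-nonstrict inequalities line up so that the at-most-$(k-1)$ and at-least-$k$ counts hold rigorously. A secondary point worth verifying is that Algorithms~\ref{alg:lower} and~\ref{alg:upper} indeed produce valid bounds with $p_l(v) \le p(v) \le p_u(v)$; I would treat this sandwiching as established by the monotonicity of Equation~\ref{eqn:dp} in its arguments, since replacing each $p(x)$ by a lower (resp.\ upper) bound yields a lower (resp.\ upper) bound for $p(v)$.
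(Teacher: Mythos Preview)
Your proposal is correct and follows essentially the same counting argument as the paper: for rule~1 you bound the number of nodes that can strictly exceed $u$ by at most $k-1$ via the upper-bound threshold $T_u$, and for rule~2 you exhibit at least $k$ nodes that strictly exceed $u$ via the lower-bound threshold $T_l$. The paper's proof is terser---it frames rule~1 as a contradiction and handles rule~2 by noting $P^k \ge T_l$---but the underlying logic and the use of the sandwiching $p_l(v)\le p(v)\le p_u(v)$ are identical; your added care about ties and strict-versus-nonstrict inequalities only makes the argument more explicit, not different.
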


\begin{proof}
For the first case, suppose a node $u$ with $p_l(u) \geq T_u$ but not being selected in the top-$k$ results, which means a node must have default probability of at least $p_l(u)$ to be selected into the top-$k$ result.
Since $T_u$ is the $k$-th largest value in $p_u(v)$, it means there will be no more than $k$ nodes that satisfy the condition. Therefore, the first case holds.
For the second case, since $T_l$ is the $k$-th largest value of $p_l(v)$, which means $P^k$ must be at least $T_l$. Note that, $P^k$ is default probability of the ranked $k$-th node in the the ground truth order. Therefore, the second case holds.
\end{proof}

\begin{algorithm}[t]
\SetVline 
\SetFuncSty{textsf}
\SetArgSty{textsf}
\small
\caption{Upper Bound Algorithm}
\label{alg:upper}
\Input
{
   $\mathcal{G}:$ a given graph, $z$: the order of bound
}
\Output{ $p_u(v)$:  upper bound of default probability}
\For{$i$ in 1 to $z$}
{
    \For{each vertex $v$ in $\mathcal{V}$}
    {
      \If{$i=1$}
      {
             \State{calculate $p(v)$ by treating its in-neighbors' default probabilities as 1}
      }
      \Else{
      \State{calculate $p(v)$ if its in-neighbors' default probabilities have been updated in the previous iteration}
      }
    }
}
\State{ $p_u(v) := p(v)$ for all nodes $v \in \mathcal{V}$}
\Return{ $p_u(v)$ for $v \in \mathcal{V}$ }
\end{algorithm}

\begin{algorithm}[t]
\SetVline 
\SetFuncSty{textsf}
\SetArgSty{textsf}
\small
\caption{Candidate Reduction}
\label{alg:bound_candidate}
\Input
{
   $p_u(v) / p_l(v)$: upper and lower bound for each node,
   $k$: a positive integer
}
\Output{ $\mathcal{B}$: candidates selected, $k'$: the number of nodes verified }
\State{$T_l$ $\leftarrow$ the $k$-th largest value in $p_l(v)$ }
\State{$T_u \leftarrow$ the $k$-th largest value in $p_u(v)$ }
\State{$\mathcal{B} := \emptyset$; $k' = 0$ }
\For{each $v$ in $\mathcal{V}$}
{
    \If{$p_l(v) \geq T_u$}
    {
        \State{$k'++$}
        \State{insert $v$ into the result set}
        \State{continue}
    }

    \If{ $p_u(v) \geq T_l$}
    {
        \State{push $v$ into $\mathcal{B}$ }
    }
}
\Return{$\mathcal{B}$ and $k'$ }
\end{algorithm}

\begin{algorithm}[t]
\SetVline 
\SetFuncSty{textsf}
\SetArgSty{textsf}
\small
\caption{Reverse Sampling Algorithm}
\label{alg:reverse}
\Input
{
   $\mathcal{G}^t:$ a given graph by reverse the direction each edge, $\mathcal{B}$: candidate nodes
}
\Output{$h_v$: for each node $v \in \mathcal{B}$ in one sample}
    \State{$h_v:= 0$ for all nodes $v \in \mathcal{V}$ }
    \ForEach{node $v$ in $\mathcal{B}$}
    {
      \State{mark all nodes as \emph{unvisited} }
      \State{$\mathcal{Q} \gets \{v\}$ }
      \While{$\mathcal{Q} \neq \emptyset$}
      {
        \State{$u = \mathcal{Q}.pop()$ }
        \If{$h_u =1$}
        {
          \State{$h_v:= 1$ and break }
        }
        \If{$u$ is \emph{unchecked}}
        {
          \State{generate a random number $r_u$ in $[0,1]$ }
          \State{mark $u$ as \emph{checked} }
          \If{ $r_u \leqslant p_s(u)$ }
          {
            \State{$h_u: =1, h_v: = 1$ and break }
          }
          \ForEach{$u' \in N(u)$ }
          {
            \If{$(u',u)$ is \emph{unchecked}}
            {
              \State{generate a random number to mark $(u',u)$ as survived or not }
            }
          }
        }
        \State{mark $u$ as \emph{visited} }
        \ForEach{$u' \in N(u)$ }
        {
          \If{$u'$ is \emph{unvisited} and $(u',u)$ is \emph{survived}}
          {
            \State{push $u'$ into $\mathcal{Q}$ }
          }
       }
      }
    }
\Return{ $h_v$ for nodes in $\mathcal{B}$ }
\end{algorithm}

Based on Lemma~\ref{le:bound_rule}, Algorithm~\ref{alg:bound_candidate} shows the details of reducing candidate space. The algorithm takes the derived lower and upper bounds as input and outputs the candidate nodes $\mathcal{B}$ and the number $k'$ of verified nodes. The verified $k'$ nodes will be put into the result set directly.
Note that, if we can verify $k'$ nodes based on the first pruning rule, then we only need to find top-$(k-k')$ nodes from the candidate set $\mathcal{B}$. In this case, we reduce both the value $k$ and $n$ of Equation~\ref{eq:size} to $k-k'$ and $|\mathcal{B}|$, respectively.

\myparagraph{Reverse sampling approach}
Based on Algorithm~\ref{alg:bound_candidate}, we can greatly reduce the candidate space, which performance is verified in our experiments on real-world datasets. In the basic sampling method, it aims to estimate the default probability for each node. Here, we only need to compute the probability for the candidate nodes. Especially, when the candidate size is small, the previous sampling method will explore a lot of unnecessary space.

Intuitively, given a sampled possible world,
for each candidate node, we only need to verify if it can be reached by a node with $h_v = 1$. Therefore, we can conduct a reverse traversal from the candidate nodes to see if it can meet the criteria. The details are shown in Algorithm~\ref{alg:reverse}, where $\mathcal{G}^t$ is the graph by reversing the direction of each edge in $\mathcal{G}$. Note that, our reverse sampling method is different from the reverse sampling framework used in influence maximization problem~\cite{Borgs:2014}.

The inputs are the graph $\mathcal{G}^t$ and candidate set $\mathcal{B}$. After processing a sample, it returns $h_v$ for each node $v$ in $\mathcal{B}$. At first, we set $h_v = 0$ for all the nodes. Then we conduct a breath first search from each node in the candidate set.
For each encountered node and edge, we mark it as \emph{checked} and store the corresponding information (e.g., \emph{survived} and $h_v$) in order to avoid generating random numbers for the same node/edge multiple times.
The BFS terminates if it encounters a node $h_v$ with $h_v=1$ or there is no more node to be explored (Lines 6-8). If it encounters a node with $h_v=1$, it means the candidate node is influenced, and vice versa. Through this way, we can greatly reduce the computation cost by filtering unnecessary searching space. Given sample size $t$, we repeat the process $t$ times and accumulate the $h_v$ value to estimate the default probability.

\myparagraph{Reverse sampling based method}
By integrating the pruning strategies and the reverse sampling technique, we have the {reverse sampling based algorithm}.
According to Theorem~\ref{th:reverse_size}, the approach is $(\epsilon,\delta)$-approximation if the sample size fulfills Equation~\ref{eq:size_bound}.


\begin{theorem} \label{th:reverse_size}
The reverse sampling based algorithm is $(\epsilon, \delta)$-approximation if the sample size is larger or equal than
\begin{equation}\label{eq:size_bound}
t = \frac{2}{\epsilon^2}\ln{\frac{(k-k')(|\mathcal{B}|-k+k')}{\delta}}
\end{equation}
\end{theorem}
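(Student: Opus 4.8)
The plan is to show that candidate reduction (Algorithm~\ref{alg:bound_candidate}) together with reverse sampling (Algorithm~\ref{alg:reverse}) reduces the original top-$k$ task to an instance of the top-$(k-k')$ task over the candidate set $\mathcal{B}$, and then to replay the pair-counting argument of Theorem~\ref{th:sample_size} with the substitutions $n \mapsto |\mathcal{B}|$ and $k \mapsto k-k'$. Observe that Equation~\ref{eq:size_bound} is exactly Equation~\ref{eq:size} under these substitutions, so once the reduction is justified the bound follows from the same union-bound computation.

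Before invoking that argument I would establish two prerequisites. First, the reverse traversal of Algorithm~\ref{alg:reverse} must yield the same estimator as the forward simulation: within a single sample the cached \emph{checked}/\emph{survived}/$h_v$ states guarantee that each node's self-default coin and each edge's survival coin is drawn once and reused, so the reverse BFS from a candidate $v$ sets $h_v=1$ if and only if $v$ is reachable from some self-defaulting node in that possible world, identical to Algorithm~\ref{alg:sampling}. Hence each $p_v$ remains an unbiased estimator of $p(v)$, and for any two candidates the per-sample difference $p_u^i - p_v^i$ lies in $[-1,1]$, which is exactly the setting of Theorem~\ref{th:middle}. Second, by Lemma~\ref{le:bound_rule} the $k'$ verified nodes genuinely lie in the true top-$k$ and the pruned nodes genuinely lie outside it, so the true top-$k$ decomposes as the $k'$ verified nodes together with exactly $k-k'$ nodes drawn from $\mathcal{B}$.

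The heart of the argument is a within-$\mathcal{B}$ good event: sort the nodes of $\mathcal{B}$ by their true default probabilities as $w_1,\dots,w_{|\mathcal{B}|}$ and require $p_{w_i}-p_{w_j}>0$ for every pair with $1\le i \le k-k' < j \le |\mathcal{B}|$ and $p(w_i)-p(w_j)\ge\epsilon$. There are $(k-k')(|\mathcal{B}|-k+k')$ such pairs; applying Theorem~\ref{th:middle} to each and taking a union bound gives failure probability $(k-k')(|\mathcal{B}|-k+k')\exp(-t\epsilon^2/2)$, and setting this equal to $\delta$ and solving for $t$ yields Equation~\ref{eq:size_bound}. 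Under this good event I would verify the two $(\epsilon,\delta)$ conditions as in Theorem~\ref{th:sample_size}: verified nodes satisfy condition~1 and pruned nodes satisfy condition~2 deterministically, while for a candidate node the same contradiction argument applies, since any misranking would force one of the correctly-ordered pairs to flip.

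The step I expect to be the main obstacle is ensuring that the within-$\mathcal{B}$ guarantee is stated against the \emph{global} threshold $P^k$ rather than a local one. The key observation making this go through is that the true top-$(k-k')$ nodes of $\mathcal{B}$ coincide with $(\text{true top-}k)\cap\mathcal{B}$, so each of them has true probability at least $P^k$ while every remaining node of $\mathcal{B}$ has true probability at most $P^k$; this lets every relevant pair gap be bounded below by $\epsilon$ and ties the candidate-level comparisons back to the definition of $(\epsilon,\delta)$-approximation. A secondary point to handle carefully is that the $k'$ verifications and prunings are deterministic, so the entire failure budget $\delta$ may be charged to the candidate comparisons alone.
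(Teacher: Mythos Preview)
Your proposal is correct and follows exactly the approach the paper takes: reduce to a top-$(k-k')$ problem over $\mathcal{B}$ via Lemma~\ref{le:bound_rule}, then replay the pair-counting and union-bound argument of Theorem~\ref{th:sample_size} with the substitutions $n\mapsto|\mathcal{B}|$ and $k\mapsto k-k'$. In fact the paper gives only a one-sentence justification (``we need to bound the order of $(k-k')(|\mathcal{B}|-k+k')$ pairs''), so your additional care in verifying that the reverse traversal yields the same unbiased estimator and that the within-$\mathcal{B}$ guarantee aligns with the global threshold $P^k$ goes beyond what the paper itself supplies.
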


Note that, we use the reverse sampling method here because the candidate space is reduced. In addition, it only accelerate the computation of influenced nodes in a sampled possible world. Based on the developed pruning techniques, we can verify some results and filter some candidates. To bound the quality of returned results, we need to bound the order of $(k-k')(|\mathcal{B}|-k+k')$ pairs.

\subsection{Bottom-k Based Approach}
\label{subsec:bottomk}

Based on the lower and upper bounds derived, we can reduce the candidate space. In addition, by using the reverse sampling technique, we can reduce the cost of exploring samples.
The reverse sampling based algorithm can return a result with tight theoretical guarantee, which reduces the sample size from Equation~\ref{eq:size} to Equation~\ref{eq:size_bound}.
However, in many real cases, the sample size and computation cost is still large.
Intuitively, we only need sufficient samples to obtain a competitive result. In this section, we derive a method by using the bottom-$k$ technique, which can greatly accelerate the procedure with competitive top-$k$ results. We first present the idea of finding the top-1 result. Then, we extend the idea for the top-$k$ scenario.

\myparagraph{Finding the top-1 result}
In the reverse sampling approach, when we process the samples one by one. We can terminate the processing, if there is a node that has sufficient statistic.
In this paper, we use bottom-$k$ sketch to serve this role. The idea is that, we first apply the lower and upper bound technique to obtain $k'$ and $\mathcal{B}$. Let $t$ be sample size computed by using Equation~\ref{eq:size_bound}. We assign each sample an id and generate a random hash value in $(0,1)$ for each of them.
Since we does not materialize the samples, the time complex of generating hash value is only $O(t)$.
We sort the samples in ascending order based on the hash value, and materialize the samples accordingly by using the reverse sampling framework. For each node $v$ in the candidate set, we record a accumulated value $v^c$. Based on Theorem~\ref{le:bk}, the node whose $v^c$ reaches $bk$ first is the top-1 result. $bk$ is the threshold selected.

\begin{theorem} \label{le:bk}
The node selected by using the above procedure is the top-1 node.
\end{theorem}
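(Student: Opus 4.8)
The plan is to recast the stopping rule as a statement about bottom-$k$ sketches and then exploit the monotonicity of the bottom-$k$ estimator. For each candidate node $v \in \mathcal{B}$, let $A_v$ denote the set of sample ids in which $v$ defaults; since every sample carries an independent random hash value in $(0,1)$, $A_v$ is exactly the kind of set for which the bottom-$k$ sketch is defined. Because the samples are materialized in ascending hash order, at any instant of the execution the counter $v^c$ equals the number of elements of $A_v$ whose hash lies below the current threshold. Hence $v^c$ reaches $bk$ precisely when the threshold crosses $\mathcal{L}(A_v, bk)$, the $bk$-th smallest hash value of $A_v$.

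First I would use this observation to pin down the selected node. Let $v^\ast$ be the first node whose counter reaches $bk$, and let $h^\ast = \mathcal{L}(A_{v^\ast}, bk)$ be the hash threshold at that moment. For every other candidate $u$, the counter $u^c$ is still strictly below $bk$, which means fewer than $bk$ elements of $A_u$ have hash at most $h^\ast$, i.e. $\mathcal{L}(A_u, bk) > h^\ast = \mathcal{L}(A_{v^\ast}, bk)$. Thus $v^\ast = \arg\min_{v \in \mathcal{B}} \mathcal{L}(A_v, bk)$, and this minimizer is already determined at the stopping time, which is what justifies terminating the procedure before all $t$ samples are processed.

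Next I would translate the $\arg\min$ over $\mathcal{L}$ into an $\arg\max$ over estimated default probabilities. By the bottom-$k$ estimator, the number of defaults of $v$ is estimated by $\widehat{|A_v|} = (bk-1)/\mathcal{L}(A_v, bk)$, so the estimated default probability $\hat{p}(v) = \widehat{|A_v|}/t$ is a strictly decreasing function of $\mathcal{L}(A_v, bk)$. Consequently $v^\ast$ is exactly the candidate maximizing the bottom-$k$ estimate of its default probability, i.e. the node the procedure designates as top-$1$.

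The hard part will be the final step: arguing that the node with the largest estimate is, with the required confidence, the true top-$1$. Here I would mirror the pairwise argument of Theorems~\ref{th:middle} and~\ref{th:sample_size}, but the obstacle is that $\hat{p}(v)$ is a ratio estimator rather than a simple average, so the clean Hoeffding bound does not apply directly. Instead I would invoke the bottom-$k$ concentration recalled earlier---relative error $\sqrt{2/(\pi(bk-2))}$ and coefficient of variation at most $1/\sqrt{bk-2}$---to control the deviation of each $\hat{p}(v)$ from $p(v)$, and choose $bk$ (together with the sample size $t$ of Equation~\ref{eq:size_bound}) large enough that the estimate gap between the true top-$1$ and any lower-ranked candidate cannot be reversed except on an event of probability at most $\delta$. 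Bounding this ratio-estimator error uniformly over $\mathcal{B}$ and reconciling it with the $(\epsilon,\delta)$ guarantee inherited from Theorem~\ref{th:reverse_size} is the crux of the proof.
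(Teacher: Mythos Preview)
Your first two paragraphs already constitute the paper's entire proof: the paper merely argues that the first node $u$ to hit the threshold has the smallest $bk$-th hash value $h^{bk}(u)$, that the bottom-$k$ estimate $\frac{bk-1}{h^{bk}(u)\,t}$ is strictly decreasing in $h^{bk}(u)$, and hence $u$ maximizes the bottom-$k$ estimate over all candidates. That is the full content of the theorem as the authors intend it.

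Your third paragraph is work the paper never attempts. You have read ``top-1 node'' as the \emph{true} top-1 (the node with largest $p(v)$), and you correctly identify that bridging from the bottom-$k$ estimate to the truth would be the hard part. But the paper's statement is weaker: ``top-1'' here means top-1 \emph{under the bottom-$k$ estimator}. Immediately after the proof the authors write ``Even though the bottom-$k$ based method does not offer tight theoretical guarantee as the previous approaches,'' explicitly disclaiming the $(\epsilon,\delta)$-style conclusion you are trying to establish. So there is no gap in your argument for what the paper actually claims; you have simply over-scoped the theorem and set yourself a harder target than the authors did. Drop the third paragraph and you match the paper exactly.
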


\begin{proof}
Suppose node $u$ is the first node that reaches the criteria and the hash value of its $bk$-th encountered sample is $h^{bk}(u)$. According to the property of bottom-$k$ sketch, we can estimate the default probability $p(u)$ with $\frac{bk-1}{h^{bk}(u)t}$. If $v$ is the second node that reaches the criteria. We must have $h^{bk}(v) > h^{bk}(u)$. Therefore, the corresponding estimated value is smaller that of $u$. The theorem is correct.
\end{proof}

Here, we use $bk$ to measure if the statistic is sufficient or not.
Even though the bottom-$k$ based method does not offer tight theoretical guarantee
as the previous approaches. Through our experimental evaluation, the bottom-$k$ based method shows great advantage compared with the others.

\begin{example}
Reconsider the graph in Figure~\ref{fig:prob_graph} with $k=1$ and $bk=2$.
Suppose nodes $D$ and $E$ are the candidates in $\mathcal{B}$.
Then, we only need to reverse the graph and conduct the traversal from $D$ and $E$.
For the simplicity, we do not present the reverse graph here.
Figure~\ref{fig:prob_graph}(b) is the first reverse sample. That is, $E$ is default by itself, and $D$ does not meet any default nodes when back-traversing from $D$. Therefore, the counter of $E$ is set as 1. Figure~\ref{fig:prob_graph}(c) is the second sample. Then, the counter of $E$ becomes 2, which meets the bottom-$k$ criterion. Thus, $D$ is the top-$k$ result returned.
\end{example}

\myparagraph{Finding the top-$k$ results}
To find the top-$k$, we follow similar manner as that in finding the top-1 result.
By extending Theorem~\ref{le:bk}, we can stop exploring the samples when there are $k-k'$ nodes with sufficient statistic. That is, we continue visiting the samples until there are $k-k'$ nodes with counters reaching $bk$.
Note that, there may be case when the stop condition cannot be met after all the samples are processed. Then, the algorithm turns to the reverse based sampling method, and we just return the $k-k'$ nodes with the largest estimated value. While, according to the experiments over real-world datasets, the algorithm can coverage quickly with $bk$.

\myparagraph{Complexity analysis}
In this paper, there are two types of samples involved, i.e., basic sample (Algorithm~\ref{alg:sampling}) and reverse sample (Algorithm~\ref{alg:reverse}). We use $t_b$ and $t_r$ to denote the cost of generating a basic sample and a reverse sample, respectively. To obtain a sample, in the worst case, we need to traverse the whole graph once for both sampling approaches, which cost is $O(m+n)$. Therefore, for the basic sampling approach (i.e., Algorithm~\ref{alg:sampling}), the time complexity is $O((m+n)\frac{2}{\epsilon^2}\ln{\frac{k(n-k)}{\delta}} + n\log{n})$, where $\epsilon$  and $\delta$ are the input parameters. For the reverse sampling based method, the bound computation and candidate reduction phase cost $O(z(m+n))$ and $O(n\log{n})$, respectively. Thus, the complexity is $O(t_r \frac{2}{\epsilon^2}\ln{\frac{(k-k')(|\mathcal{B}|-k+k')}{\delta}} + z(m+n) + 2n\log{n})$.
The time complexity of the bottom-$k$ based method is the same as that of the reverse sampling based approach. This is because, we need to explore all the samples in the worst case, but it is usually much faster than others in practice.

\section{Experiment}
\label{sec:experiment}

In this section, we conduct extensive experiments to evaluate the effectiveness and efficiency of our proposed methods.

\begin{table}
\centering
\caption{Details of experimental datasets} \label{tb:dataset}\vspace{-0pt}
\renewcommand{\arraystretch}{1.1}
\setlength{\tabcolsep}{7pt}
  \begin{tabular}{lrrrrr}
  \toprule
  Datasets & \# Nodes & \# Edges & Avg Deg. & Max Deg. \\
  \midrule
  Bitcoin &  3,783 & 24,186 & 6.39 & 888\\
  Facebook &  4,039 & 88,234 & 21.85 & 1,045\\
  Wiki &  7,115 & 103,689 & 14.57 & 1,167\\
  P2P &  62,586 & 147,892 & 2.36 & 95\\
  Citation &  2,617 & 2,985 & 1.14 & 44\\
  Interbank &  125 & 249 & 1.99 & 47\\
  Guarantee &  31,309 & 35,987 & 1.15 & 14,362\\
  Fraud &  14,242 & 236,706 & 16.62 & 85,074\\
  \bottomrule
\end{tabular}\vspace{-0pt}
\end{table}

\subsection{Experiment Setup}

\myparagraph{Datasets} We conduct the experiments on 3 real-world financial datasets, i.e., Interbank\footnote{\url{https://github.com/carloscinelli/NetworkRiskMeasures}}, Fraud and Guarantee and 5 public benchmark datasets with drastically varying sizes and characteristics. Fraud and Guarantee are our contributed dataset. The details about the 3 real-world financial datasets are shown as follows.

\begin{itemize}

  \item \textbf{Interbank}. Interbank networks is generated by the maximum-entropy (ME) approach \cite{anand2015filling}, in which each node represents a bank and edge corresponds to an interbank loan from the lender bank to the borrow bank.

  \item \textbf{Fraud}. Credit card fraud networks with $19,240$ nodes and $34,892$ edges is constructed based on credit card fraud transactions of a major commercial bank. Each edge represents a trade between the consumer and merchant.

  \item \textbf{Guarantee}. The guaranteed loans network is from a major commercial bank spanning 4 years. The names of the customers in the records are encrypted and replaced by IDs. We can access the guarantee relationships, which denotes an edge between the guarantor to borrower. Besides, in case studies, we also get the basic profile information such as the enterprise scale, and loan information such as the loan credit.
\end{itemize}

Besides the real-world financial datasets, we also employ 5
benchmark datasets, which are public available. We download Citation from network repository\footnote{\url{http://networkrepository.com/}}. The others are downloaded from SNAP\footnote{\url{https://snap.stanford.edu/data/}}. The statistic details of datasets are shown in Table \ref{tb:dataset}.

\begin{figure}[!tbp]
  \centering
  \begin{minipage}[b]{0.4\textwidth}\vspace{5pt}
    \includegraphics[width=\textwidth]{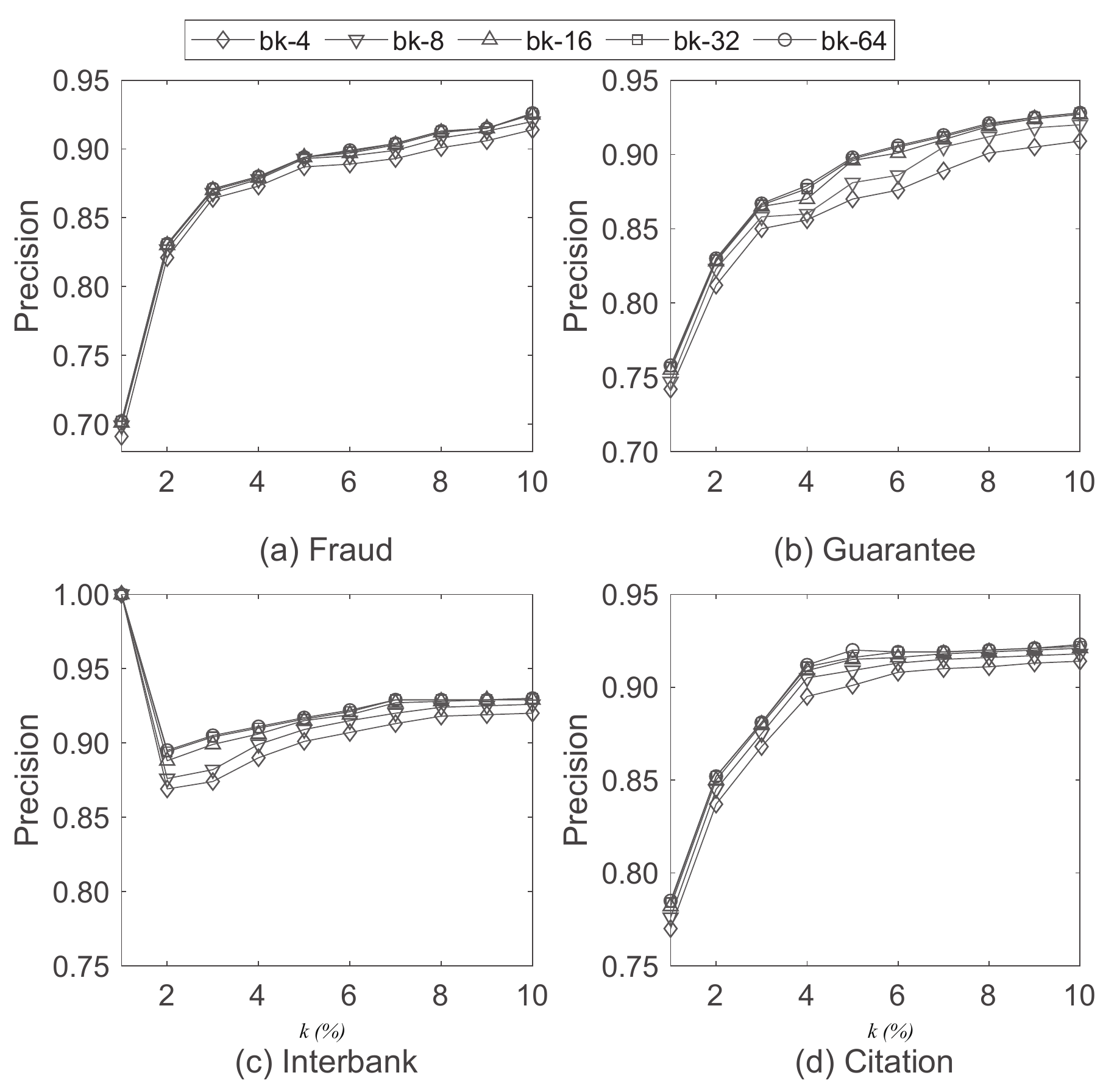}
    \caption{Parameter $bk$ tuning for bottom-$k$ based method}\label{tuning_bs}
  \end{minipage}
  \ \ \ \
  \begin{minipage}[b]{0.4\textwidth}\vspace{-5pt}
    \includegraphics[width=\textwidth]{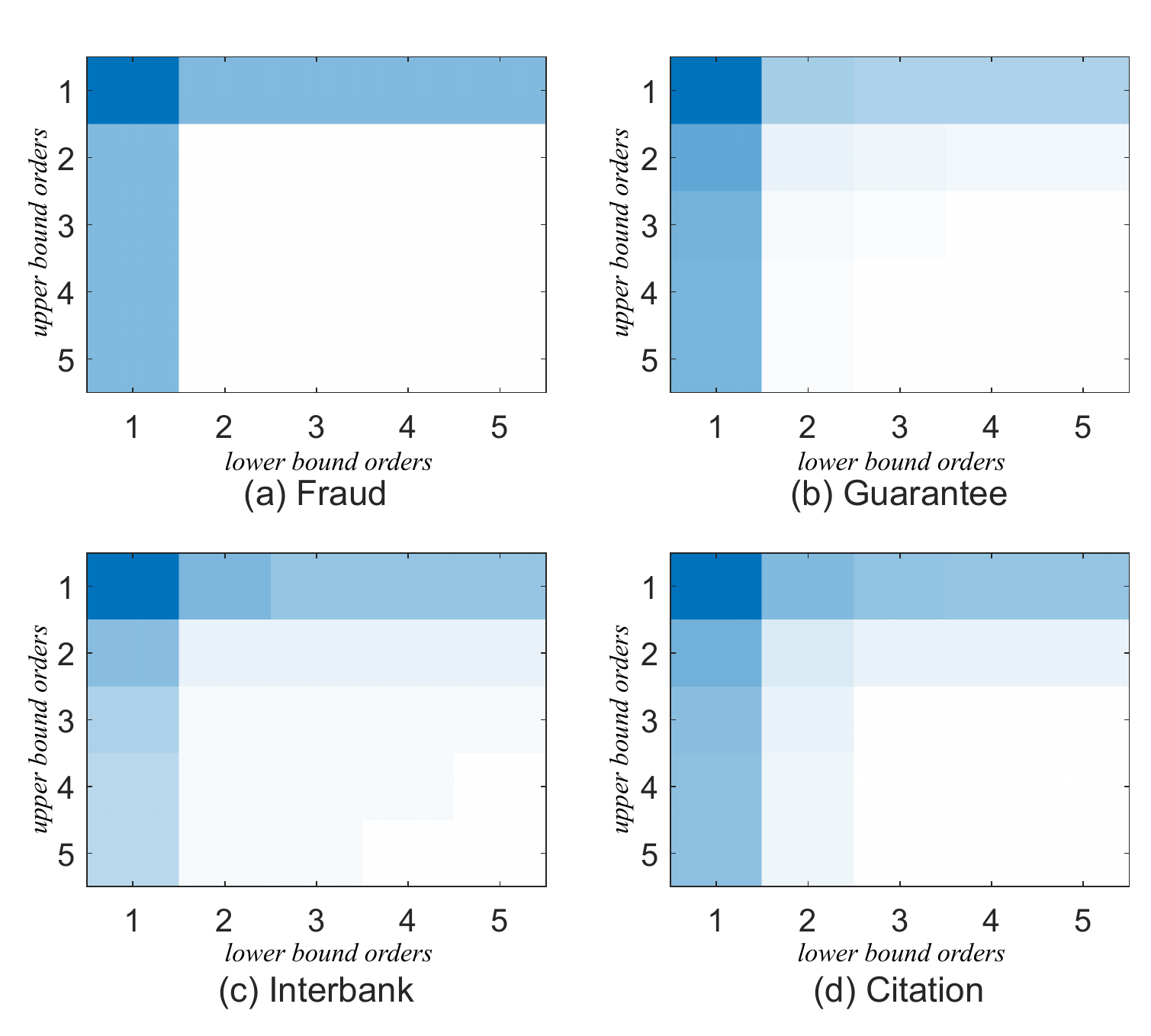}
    \caption{Parameter tuning for the order of bounds}\label{tuning_bound}
  \end{minipage}
\end{figure}

%
%

\begin{figure*}[tb!]\vspace{5pt}
  \centering
  \includegraphics[width=0.97\linewidth]{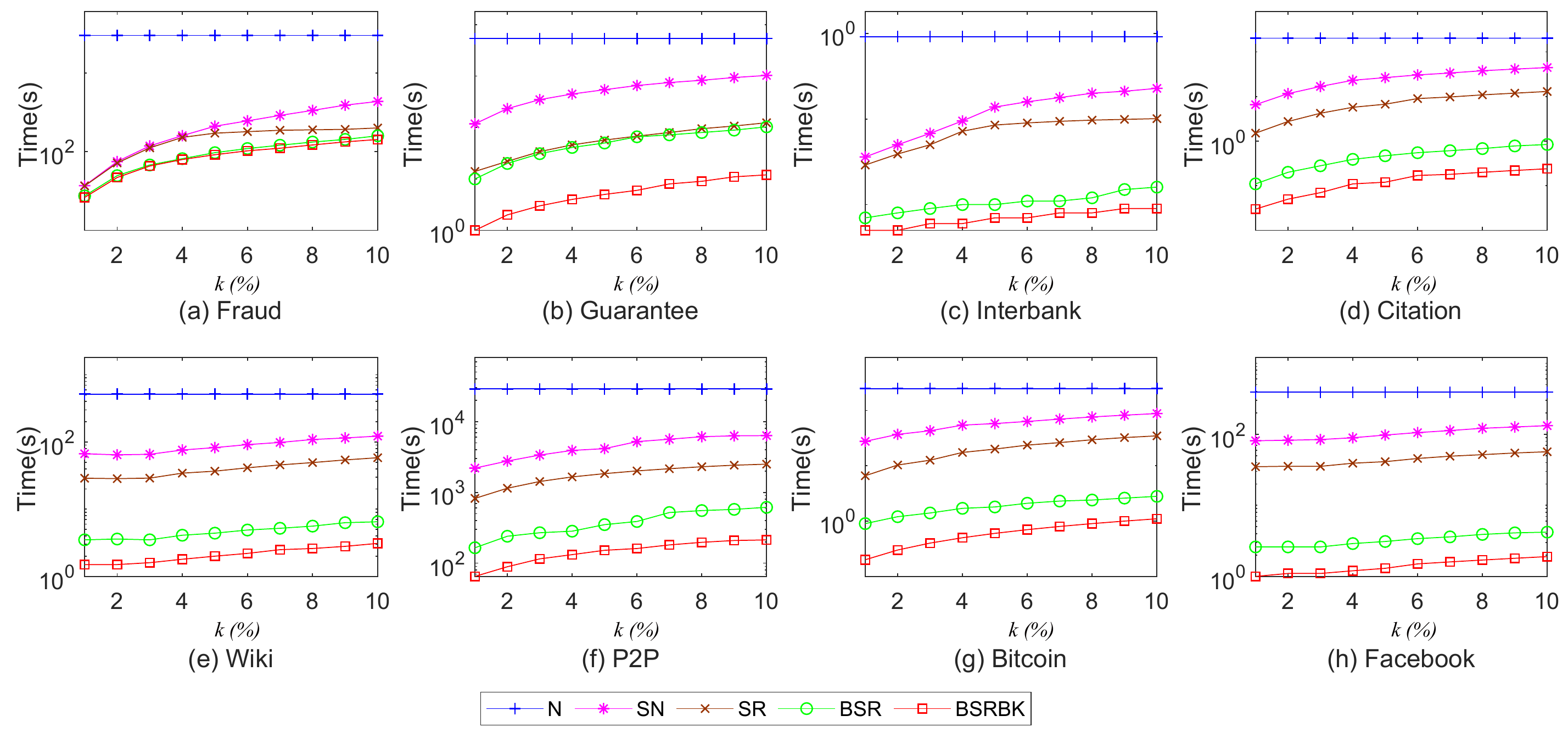}\vspace{-5pt}
  \caption{Efficiency evaluation}\label{exp_times}\vspace{-5pt}
\end{figure*}

\myparagraph{Algorithms}
We evaluate the following algorithms to demonstrate the performance of following algorithms.
\begin{itemize}
  \item N (\textit{Naive}). Algorithm~\ref{alg:sampling} with fixed sample size.
  \item SN (\textit{Naive+Sample}). Algorithm~\ref{alg:sampling} with the sample size calculated by Equation \ref{eq:size}.
  \item SR (\textit{Sample+Reverse}). Algorithm that uses reverse sampling method with candidate set derived with second rule of Lemma 1.
  \item BSR (\textit{Bound+Sample+Reverse}). Optimized sampling method by integrating reverse sampling and bounds filtering techniques with the sample size calculated by Equation~\ref{eq:size_bound}.
  \item BSRBK (\textit{Bound+Sample+Reverse+Bottom-k}).
  Bottom-$k$ based method by integrating reverse sampling and bounds filtering techniques.
\end{itemize}

\myparagraph{Parameters and workload}
To evaluate the effectiveness of proposed techniques, the precision is reported.
For the ground truth, we use 20000 sampled possible worlds to obtain the results, which setting is commonly used in related research~\cite{DBLP:conf/kdd/KempeKT03,li2018influence,ke2019depth}.
For Fraud and Guarantee datasets, the self-risk and diffusion probability are obtained in our previous research \cite{fu2016credit,cheng2018prediction}. For the other datasets, the probability is randomly selected from $[0,1]$. For parameter $k$, we vary it from $1\%|\mathcal{V}|$ to $10\%|\mathcal{V}|$, where $|\mathcal{V}|$ is the corresponding graph node size. We set $\epsilon = 0.3$ and $\delta = 0.1$
for computing the sample size.

In the experiment, we run all the methods by a server with two Intel E5-2680 CPU, 512GB memory and CentOS v7.2 operation system. In the system implementation, we develop the web backend server by Spring Cloud and frontend with JavaScript D3.js library. In case studies, CNN-max, crDNN and HGAR are experimented on a GPU server with two pieces of Telsa-P100 and implemented by Tensorflow.

\subsection{Parameter Tuning}
In this section, we tune the parameters $bk$ and the order of bounds on 4 datasets, i.e., Citation, Interbank, Fraud and Guarantee.

\myparagraph{Tuning the parameter \textit{bk}}
As analyzed in the paper, the precision of \ours should converge rapidly with the increase of $bk$. We vary $bk$ from 4 to 64.
The results are shown in Figure~\ref{tuning_bs}.
Note that $bk$-$X$ means $bk$ is set to $X$.
With the increase of $bk$, the algorithm converges quickly for all the datasets.
When $bk$ reaches 8, the drop of performance already becomes less significant.  Thus, in the following experiments, we set $bk$ to 16.

\myparagraph{Tuning the order of bounds}
Since the tightness of lower and upper bounds may greatly affect the sample size and computation cost, we conduct the experiments to tune the order of bounds.
We vary the order of bounds from 1 to 5 and set $k$ as $5\%$ of the number of nodes. The candidate size is reported.
Figure \ref{tuning_bound} visualizes the result with heatmaps. The lighter the color is, the less number of candidates will be. As we can see, the candidate size decreases rapidly at the beginning, and reach steady when the order reaches 2 for most cases. Therefore, we set the order of upper and lower bounds to 2 for the following experiments.

\begin{figure}[tb!]
  \centering
  \includegraphics[width=0.55\linewidth]{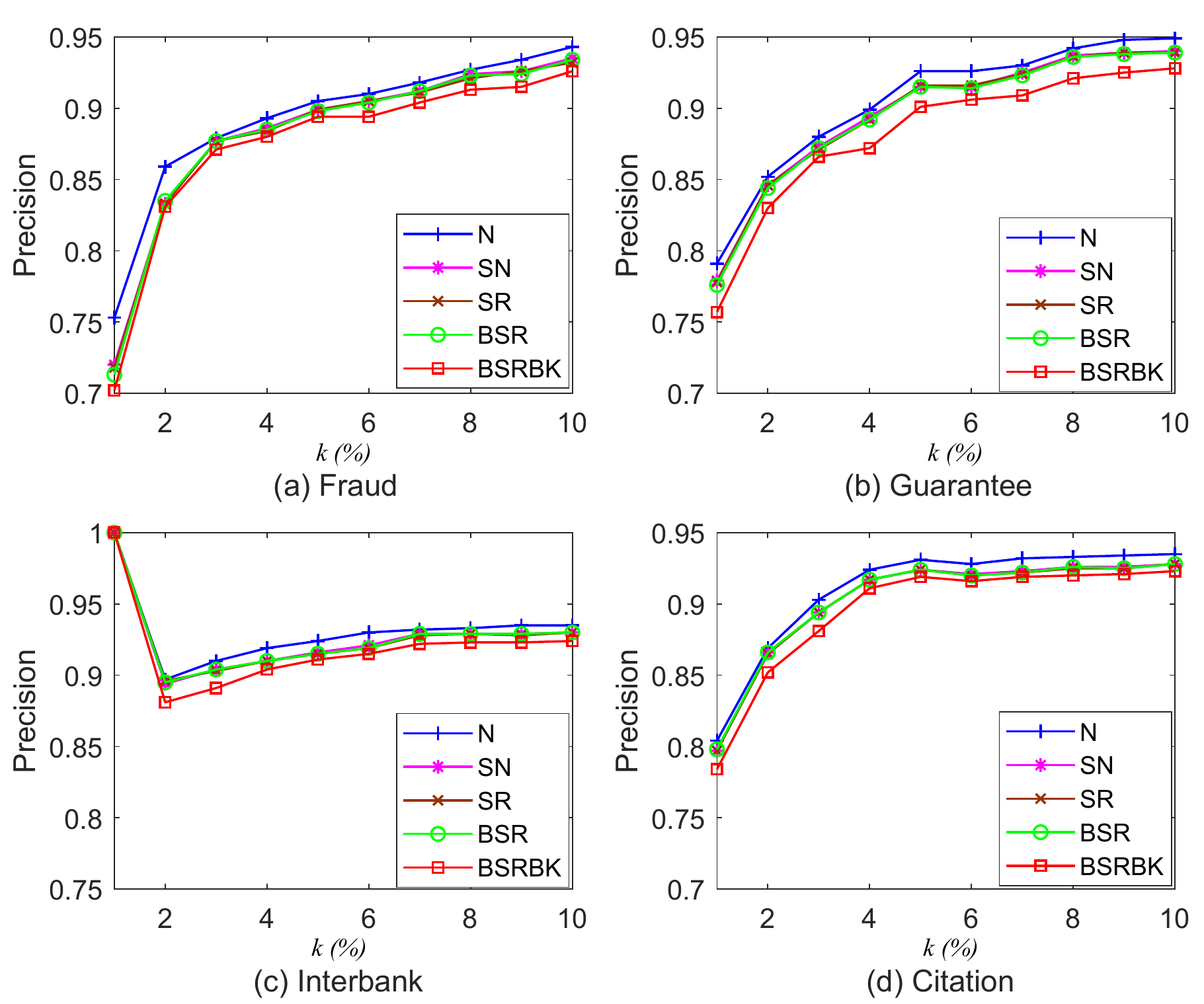}\vspace{-5pt}
  \caption{Effectiveness evaluation}\label{exp_precision}
\end{figure}

\subsection{Efficiency Evaluation}
To demonstrate the efficiency of proposed techniques, we conduct experiments on all the datasets and report the response time. The results are shown in Figure \ref{exp_times}. In all methods, the computation time gradually increases along with $k$ except for the naive approach N, because N uses a large fixed sample size.
For the other methods, the sample size may change when $k$ increases.
As we can observe, algorithm N is the most time-consuming method, and the algorithm runs faster when more accelerating techniques involved. SR is better than SN because the reverse sampling technique and candidate set derived can greatly reduce the sampling cost.
BSR is better than SR, since we can reduce the candidate space and sample size by using the lower and upper bounds derived.
\ours is better than BSR
because of the novel stop condition used. \ours always outperforms
the others and achieves up to 100x acceleration. These observations strongly proves the advantage of proposed techniques.

\subsection{Effectiveness Evaluation}

To evaluate the effectiveness of proposed methods, the precision is reported by varying $k$ from $1\%|\mathcal{V}|$ to $10\%|\mathcal{V}|$. The results are shown in Figure~\ref{exp_precision}.
Generally, the precision of the 5 methods is very close to each other,
and the largest gap between the naive method N and \ours is only 3\%.
Compared with the speedup in efficiency, the precision difference is much less noticeable.
The naive method N is slightly better than the other methods, because it has used more samples. SN, SR and BSR report almost the same result, because they obtain the same theoretical guarantee.
It should be noted that for the Interbank dataset, $1\%|\mathcal{V}| = 1$ and all methods successfully detect that node. Therefore, the precision is 1 as shown in Figure~\ref{exp_precision}(c).
As observed, the experiment results prove that \ours could achieve significant performance acceleration while keeping a tolerable precision reduction.

\begin{figure*}[tb!]
  \centering
  \includegraphics[width=0.9\linewidth]{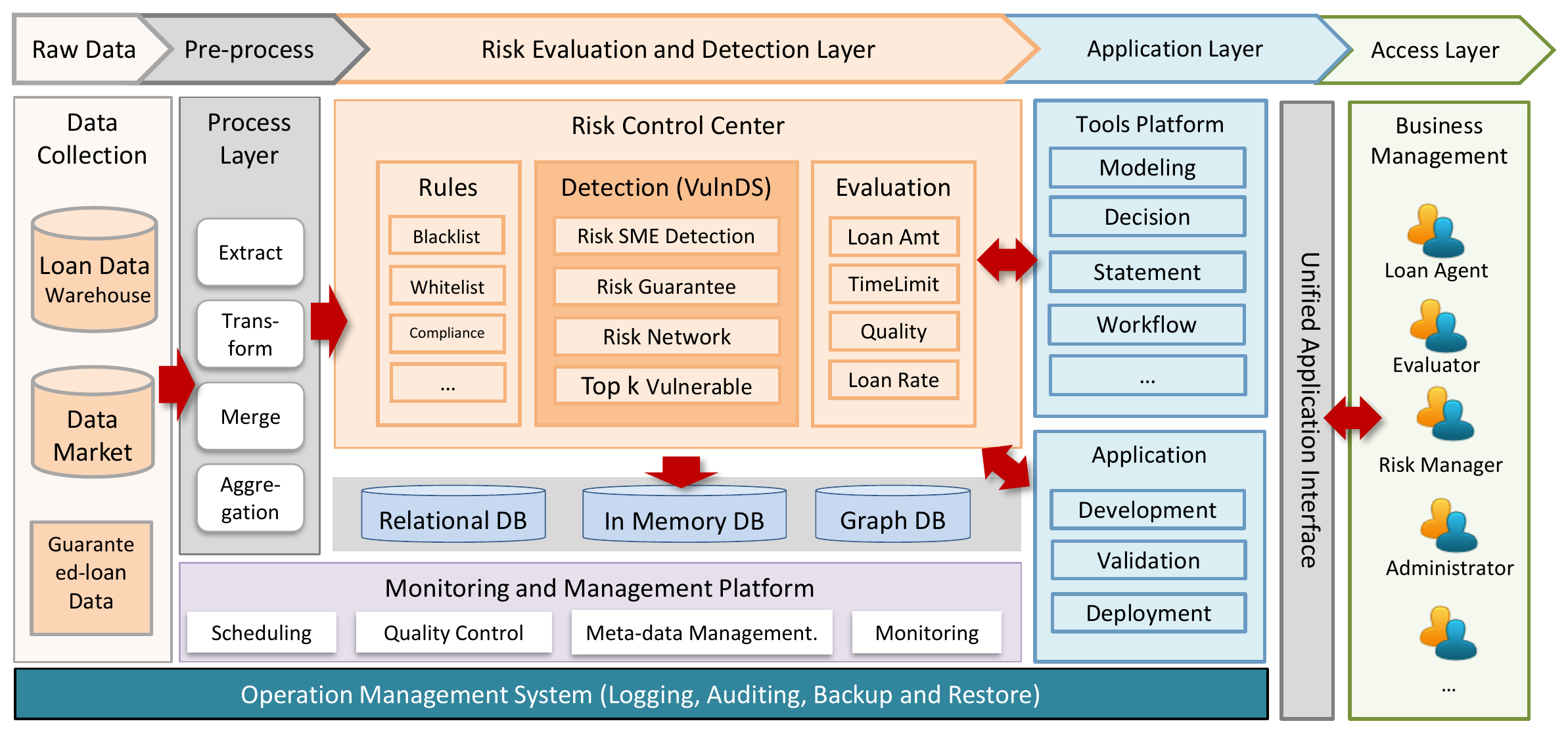}
  \caption{System architecture of {VulnDS} in a loan management system}\label{dep_arc}
\end{figure*}

\section{Implementation and Case Study}
\label{sec:dep}

In this section, We present a system, named {VulnDS}, by integrating the proposed techniques with our current loan risk control system. The control system is deployed in our collaborated bank, which can evaluate our methods in real scenarios.
We first present the overall architecture for {VulnDS}, and then describe the details of system implementation. Finally, we report the interface, observation and case study after system deployment.

\subsection{System Implementation and Deployment}

\myparagraph{Architecture Overview}
Figure \ref{dep_arc} shows the architecture overview of the {VulnDS} in a loan management system. We collect origin data from three upstream: loan data warehouse, data market and external loan data. In the pre-processing layer, raw records are extracted, merged and aggregated for risk control. We employ the in-memory database to store the frequent queried data, and graph database to preserve networked relationships, as well as rational DB for conventional tables. We utilize a monitoring platform for scheduling submitted tasks from the pre-processing and risk control module. The risk assessment results are consumed by the tools and application platform, which is the main scenario to control loan risks. Different roles of business users access the system from a unified application interface.

The risk control center consists of three main parts: the rule engine, vulnerable detection system and evaluation module. Rule engine mainly includes loan blacklist, white list and compliance rules.
If a loan passes the rule check, it will be then processed by our proposed vulnerable detection system. {VulnDS} assess the self-risk of SME, the risk of guarantee relationships, and detect the top-$k$ vulnerable nodes by our methods. Evaluation module leverage the output of {VulnDS} to quantify the loan grant amount, time limit and interest ratio, etc. Once the bank issue a loan, post-loan process are activated immediately. All three steps in the risk control center will be employed to evaluate all issued loans regularly. In our implementation, we detect all loans monthly by the proposed {VulnDS} in a risk control center.

\begin{figure}[tb!]\vspace{-5pt}
  \centering
  \includegraphics[width=0.65\linewidth]{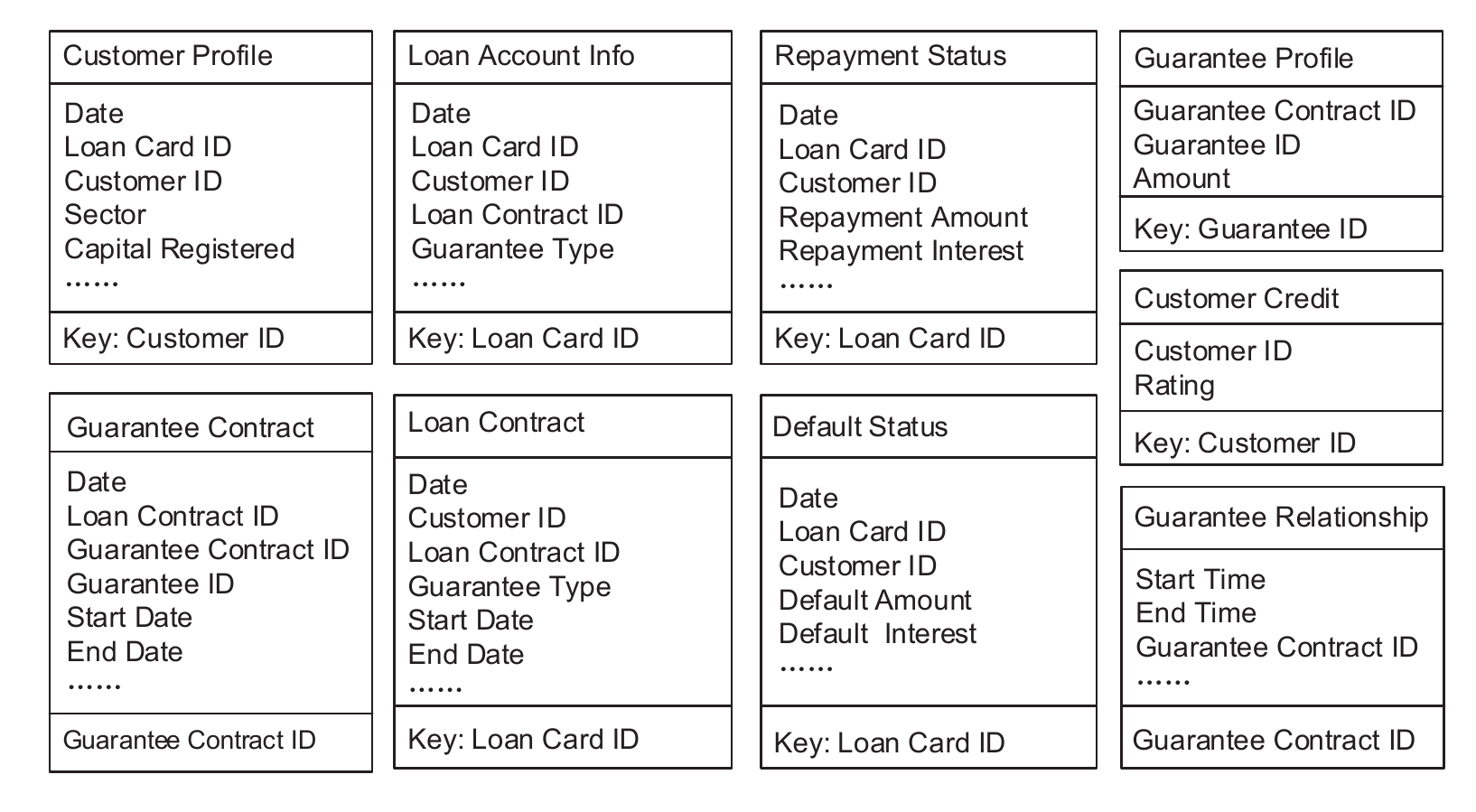}\vspace{-5pt}
  \caption{Overview of data association}\label{preliminaries-rawdata}
\end{figure}

\begin{figure}[tb!]
  \centering
  \includegraphics[width=0.8\linewidth]{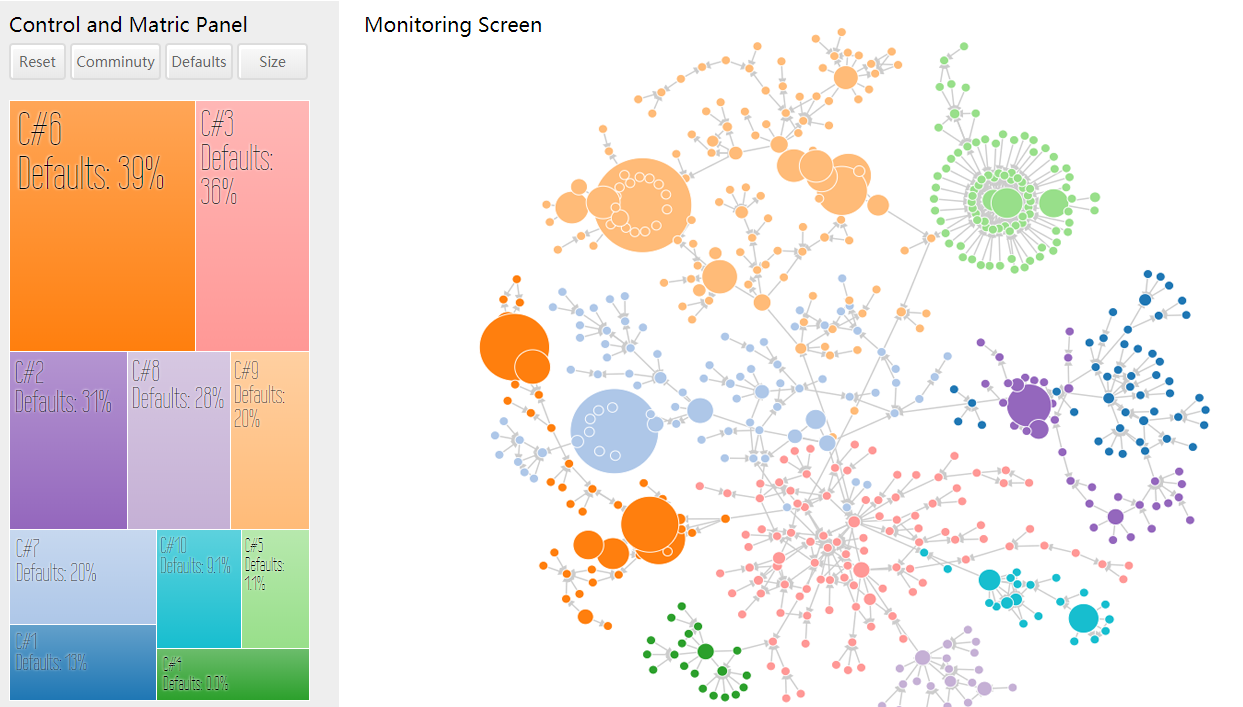}
  \caption{UI of deployed loan management system}\label{deployment}
\end{figure}

\myparagraph{Implementation Details}
Figure \ref{preliminaries-rawdata} shows an overview of the data association, which is extracted by the pre-processing layer.
We employ the internal black and white lists from our collaborated bank. The rules are mainly under the compliance of the new Basel protocol\cite{engelmann2011basel}. In vulnerable detection system, we employ HGAR \cite{DBLP:conf/ijcai/ChengTMN019} for self-risk assessments, p-wkNN \cite{cheng2018prediction} to infer the probability of risk guarantee relationships. The proposed methods are utilized for the final vulnerable SME detection. During implementation, we use the Drools \cite{thu2017transforming} on Apache Flink as the rule engine, in which the hot data are stored in Redis \cite{patel2015sales}. We employ neo4j as the graph database, visualize the graph by open-source software package D3.js and layout ForceAtlas2 \cite{jacomy2014forceatlas2}.  The training model and system implementation are written in Python, Java, and Scala.

\myparagraph{System Deployment}
Our proposed {VulnDS} is deployed in a loan management system of our collaborated bank. Figure~\ref{deployment} presents the system interface and main components, where the left part of Figure~\ref{deployment} presents the control and metric panel, including the risk statistics of each of the loan communities and control menus. The right part of Figure~\ref{deployment} displays the loan status monitoring screen. The node size indicates the delinquent probability of each company, which is dynamic and changes periodically according to the time window. Thus, risk managers could focus on risky and dominant companies.

\subsection{Case Study for Loan Default Prediction}

In this section, we conduct the case studies based on the deployed system.
We directly observe labels from real-world behavior and validate the prediction result with the tagged labels.
In the previous research, many machine learning based methods are developed for loan default prediction.
To further demonstrate the performance of proposed methods, we compare the proposed methods with some baseline approaches, which are designed for the default prediction task for real-world system. The baseline methods include Wide \cite{mcmahan2011follow}, Wide and Deep \cite{cheng2016wide}, CNN-max \cite{zheng2017joint}, GBDT \cite{ke2017lightgbm}, crDNN \cite{tan2018deep}, INDDP \cite{cheng2018prediction}, HGAR \cite{DBLP:conf/ijcai/ChengTMN019}.
We also employ betweenness~\cite{mumtaz2017identifying}, PageRank~\cite{langville2004deeper}, $k$-core~\cite{chen2021edge} and influence maximization (InfMax)~\cite{DBLP:conf/kdd/KempeKT03} methods as baselines.
We conduct the experiments over real-world dataset, i.e., Guarantee dataset, which spans 4 yeas, from 2012 to 2016.
As observed, most of the loans are repaid monthly.
Hence, we aggregate the behavior features within one-month time window and mark the delinquency loans as the target label for the month.
The records of 2012 are used as the training data. Then, we predict the defaults over the next three years.
For the baseline methods, the training data is used to train the prediction models.
For our methods, the training data is used to train the probabilities involved in the networks, which details are shown in our previous research \cite{cheng2018prediction}.

The results are shown in Table~\ref{tab:result}, where AUC (Area Under the Curve) for each year is reported. As we can see, GBDT and  Wide \& Deep outperform the Wide model, because of the increase of model capacity.
INDDP and HGAR are shown to be competitive across all the baselines.
Betweenness and PageRank are close to each other, which do not perform satisfactory. InfMax and K-core are much better which are still suboptimal.
BSR and \ours surpasses all the other approaches, which means
the graph structure and default diffusion properties are effective for  default prediction tasks.
BSR is slightly better than \ours, because it can offer tight theoretical guarantee.

\begin{table}[t]
  \center
  \caption{Results of default prediction}\label{tab:result}
  \renewcommand{\arraystretch}{1.1}
\setlength{\tabcolsep}{9pt}
  \begin{tabular}{lccc}
    \toprule
     & AUC(2014) & AUC(2015) & AUC(2016) \\
    \midrule
    Wide            & 0.75509  &  0.77751  &  0.78195     \\
    Wide \& Deep    & 0.76464  &  0.79825  &  0.81053     \\
    GBDT            & 0.77263  &  0.80627  &  0.81182     \\
    CNN-max         & 0.77645  &  0.80049  &  0.81492     \\
    crDNN           & 0.77429  &  0.79565  &  0.81054     \\
    INDDP           & 0.79015  &  0.80927  &  0.81588     \\
    HGAR            & 0.81310  &  0.80988  &  0.81875     \\
    Betweenness     & 0.60649  &  0.60577  &  0.60095     \\
  PageRank        & 0.61359  &  0.61643  &  0.62475     \\
    K-core          & 0.65551  &  0.66281  &  0.66816     \\
    InfMax          & 0.70255  &  0.70927  &  0.71132     \\ \midrule
    \ours           & 0.82367  &  0.82835  &  0.83709     \\
    BSR           & \textbf{0.82539}$^{**}$  &  \textbf{0.83004}$^{**}$  &  \textbf{0.83917}$^{**}$ \\
  \bottomrule
\end{tabular}
\vspace{-0pt}
\end{table}

\section{Related Work}
\label{subsec:related_work}


\myparagraph{Uncertain graph}
The uncertain (probabilistic) graph, where each node or edge may appear with a certain probability, has been widely used to model graphs with uncertainty in a wide spectrum of graph applications.
A large number of classical graph problems have been studied in the context of probabilistic graphs.
For instance, \cite{jin2011distance} investigates the distance-constraint reachability problem in probabilistic  graph. \cite{potamias2010k} introduces a framework to address the $k$ nearest neighbors (kNN) queries on probabilistic graphs.
\cite{khan2018conditional} investigate the reliability problem based on conditional probability. In \cite{ke2019depth}, authors provide a comprehensive comparison between different reliability algorithms.
The problem of vulnerable nodes detection has been investigated in the context of network reliability (e.g.,~\cite{articleli13,DBLP:conf/infocom/SenMBDC14,Cetinay2018}).
Nevertheless, their models are inherently different with ours, and the techniques cannot be trivially applied.
The problem investigated in this paper is similar to the study of node influence under the independent cascade (IC) model~\cite{DBLP:conf/kdd/KempeKT03,li2018influence} in the sense that the influence of a node can be modeled by possible world semantics.
Although a large body of works (e.g.,~\cite{DBLP:conf/kdd/KempeKT03,DBLP:journals/tkde/WangZZLC17,DBLP:conf/sigmod/TangSX15,wang2016efficient}) have been developed for the problem of influence maximization under the IC model, their proposed techniques cannot be applied to our problem due to the inherent difference between the two problems. Firstly, the nodes in IC model do not carry any probability. Secondly, their focus is to select $k$ nodes such that the spread of influence is maximized. While we aim to find $k$ nodes with largest default probabilities.

\myparagraph{Credit evaluation}
Consumer credit risk evaluation is often technically addressed in a data-driven fashion and has been extensively investigated \cite{baesens2003using, hand1997statistical}. Since the seminal “Partial Credit” model \cite{masters1982rasch}, numerous statistical approaches have been introduced for credit scoring: logistic regression, k-NN, neural network, and support vector machine.
More recently, \cite{baesens2003using} presents an in-depth analysis on interpreting the learned knowledge embedded in neural networks by using explanatory rules, and discusses how to visualize these rules. Researchers combine debt-to-income ratio with consumer banking transactions and use a linear regression model with time-windowed dataset to predict the default rates in a short future. They claim an 85\% default prediction accuracy and can save costs of between 6\% and 25\% \cite{khandani2010consumer}.

\myparagraph{Diffusion in finance}
The relationship between network structure and financial system risk has been carefully studied and several insights have been drawn. Network structure has little impact on system welfare, but it is important in determining systemic risk and welfare in short-term debt \cite{allen2010financial}.
Network theory attracts more attention after the 2008 global financial crisis. The crisis brought about by Lehman Brothers infects connected corporations, which is similar to the 2002 Severe Acute Respiratory Syndrome (SARS) epidemic.  Both of them start from small damages, but hit a networked society and cause serious events \cite{bougheas2015complex,cheng2020contagious}.
The dynamic network produced by bank overnight funds loans may be an alert of the crisis \cite{van2013using}. Moreover, research that aims to understand individual behavior and interactions in the social network has also attracted extensive attention \cite{onnela2006complex, qiu2016lifecycle}. Although preliminary efforts have been made using network theory to understand fundamental problems in financial systems \cite{chow2008social}, there is little work on system risk analysis in networked-guarantee loans \cite{meng2017netrating}.

\section{Conclusion}
\label{sec:conclusion}

In this paper, we investigate the problem of top-$k$ vulnerable nodes detection over uncertain graphs, which is very important for risk management in real-world applications. We formally define the problem by considering both self-risk probability of the nodes and the prorogation probability of defaults among graph nodes.
Due to the hardness of the problem, a sampling based method is developed with tight theoretical guarantee. To scale for large networks,
effective pruning techniques and advanced sampling method are proposed with rigorous theoretical analysis. To further accelerate the search, a bottom-$k$ based approach is presented.
We conduct extensive experiments over real-world datasets to demonstrate the efficiency and effectiveness of proposed techniques.
Moreover, the proposed techniques are integrated into our loan risk control system, which is deployed in real financial environment. Through case studies, we further verify the advantages of proposed model.

\bibliographystyle{unsrt}
\bibliography{references}

\end{document}